\newtheorem{theorem}{Theorem}
\newtheorem{proposition}{Proposition}
\newtheorem{corollary}{Corollary}
\newtheorem{remark}{Remark}
\newtheorem{property}{Property}
\title{\LARGE\bf
A Game Theoretic Analysis of LQG Control under Adversarial Attack
}
\author{Zuxing Li, Gy\"{o}rgy D\'{a}n, and Dong Liu%
\thanks{This work was partly supported by the Swedish Foundation for Strategic Research (SSF) through the CLAS project and by MSB through the CERCES project.}%
\thanks{Z. Li, G. D\'{a}n, and D. Liu are with the School of Electrical Engineering and Computer Science, KTH Royal Institute of Technology, Stockholm, Sweden
{\tt\small\{zuxing;gyuri;doli\}@kth.se}}%
}
\begin{document}

\maketitle
\thispagestyle{empty}
\pagestyle{empty}

\begin{abstract}
Motivated by recent works addressing adversarial attacks on deep reinforcement learning, a deception attack on linear quadratic Gaussian control is studied in this paper. In the considered attack model, the adversary can manipulate the observation of the agent subject to a mutual information constraint. The adversarial problem is formulated as a novel dynamic cheap talk game to capture the strategic interaction between the adversary and the agent, the asymmetry of information availability, and the system dynamics. Necessary and sufficient conditions are provided for subgame perfect equilibria to exist in pure strategies and in behavioral strategies; and characteristics of the equilibria and the resulting control rewards are given. The results show that pure strategy equilibria are informative, while only babbling equilibria exist in behavioral strategies. Numerical results are shown to illustrate the impact of strategic adversarial interaction.
\end{abstract}

\section{INTRODUCTION}
\label{sec1}
Deep reinforcement learning (DRL) has recently emerged as a promising solution for solving large Markov decision processes (MDPs) and partially observable MDPs (POMDPs), thanks to deep neural networks used as policy approximators~\cite{mnih2015}. DRL has, however, been shown to be vulnerable to small perturbations of the state observation, called adversarial examples, which were found to mislead the control agent to take suboptimal control actions~\cite{huang2016}. While there has been a significant recent interest in the design of adversarial examples against DRL~\cite{huang2016,lin2017,behzadan2017,russo2019,gleave2019}, there has been little work on characterizing the ability of agents to adapt to those.

Recent work proposed to use adversarial examples for making DRL agents more robust to perturbations, by letting the adversary and the agent play against each other, and formulating the interaction as a stochastic game (SG)~\cite{pinto2017}. Nonetheless, in the case of adversarial examples the agent cannot observe the system state directly, neither can the adversary affect the state transition probabilities directly, only through the actions taken by the agent. Hence, the SG model does not capture the information structure of the problem. Effectively, in the presence of adversarial examples the agent has to solve a POMDP, where the observations are subverted by the adversary so as to mislead the agent.

As a model of this interaction, in this paper we propose a game theoretical model to study the strategic interaction between an agent that has to solve a linear quadratic Gaussian (LQG) control problem, and an adversary that can manipulate the agent's observations by a randomly chosen affine transformation subject to a mutual information constraint, and aims at minimizing the control reward. The resulting problem is formulated as a dynamic cheap talk game, which captures information asymmetry, the beliefs of the adversary and the agent, and the undetectability constraint imposed on the adversarial attacks.

Our paper contributes to the solution of the formulated game theoretical problem in two ways. First, we address necessary and sufficient conditions for the existence of subgame perfect equilibria (SPEs) in pure strategies and in behavioral strategies, and we characterize the equilibrium strategies. Second, we characterize the rewards achievable in equilibria, and relate them to the rewards of a naive agent and an alert agent under attack. The key novelty of our contribution is that we characterize the strategies to be followed by the agent and by the adversary under strategic interaction, which has not been addressed by the existing literature.


The rest of the paper is organized as follows. In Section~\ref{sec1b} we review related work. In Section~\ref{sec2} we present the system model and problem formulation. In Section~\ref{sec4} we provide analytical results. In Section~\ref{sec5} we provide numerical results. Section~\ref{sec6} concludes the paper.

{\bf Notation:} Unless otherwise specified, we denote a random variable by a capital letter and its realization by the corresponding lower-case letter. We denote  by $\mathcal{N}(\cdot,\cdot)$ the Gaussian distribution,  by $\mathbb{S}(\cdot)$ the support set, by $I(\cdot;\cdot)$ the mutual information, and by  $||\cdot||$ the cardinality of a set.

\section{RELATED WORK}
\label{sec1b}
Related to our work are previous researches on robust POMDP under uncertainty of the system dynamics~\cite{osogami2015}. In~\cite{osogami2015} the control action was optimized under the worst-case assumption of the system dynamics in each stage, i.e., the agent plays as the leader and the dynamic system plays as the follower in a Stackelberg game. SG and partially observable SG (POSG) were used to model the strategic interaction of players in a dynamic system, and have been employed in robust and adversarial problems~\cite{pinto2017,gleave2019,horak2017}. But unlike in the case of learning under adversarial attacks, in SG and in POSG the players interact with each other through the impact of their actions on the state transitions, not on the state observations. Our work is related to the cheap talk game~\cite{crawford1982}, where a sender with private information sends a message to a receiver and the receiver takes an action based on the received message and based on its belief on the inaccessible private information. Closest to our model are~\cite{saritas2017,saritas2019,zuxing2019}. In~\cite{saritas2017} a dynamic cheap talk game was proposed to study a deception attack on a Markovian system, where the actions do not affect the state transitions. In~\cite{saritas2019} authors developed a dynamic game model of the attacker-defender interaction, and characterized the optimal attack strategy as a function of the defense strategy, allowing for a static optimal defense strategy. In our preliminary work~\cite{zuxing2019} we proposed a dynamic cheap talk framework to model deception attacks on a general MDP, and addressed computational issues.

Adversarial variants of LQG control were considered in a number of recent works. A Stackelberg game was formulated in~\cite{sayin2017}, where the dynamic system is the leader, while the agent is the follower and may be an adversary. The authors formulated a finite horizon hierarchical signaling game between the sender and the receiver in a dynamic environment and showed that linear sender and receiver strategies can yield the equilibrium~\cite{sayin2019}. In~\cite{zhang2017}, the opposite problem was studied but without considering the complete strategic interaction, where the adversary optimally manipulates the control actions instead of the system states. The optimal attack on both the system state and the control action in LQG control was studied in~\cite{zhang2019}. In~\cite{chen2016}, a targeted attack strategy was studied to mislead the LQG system to a particular state while evading detection.

\section{ADVERSARIAL LQG CONTROL PROBLEM}
\label{sec2}

\begin{figure}[tb]
\begin{center}
\centerline{\includegraphics[scale=0.6]{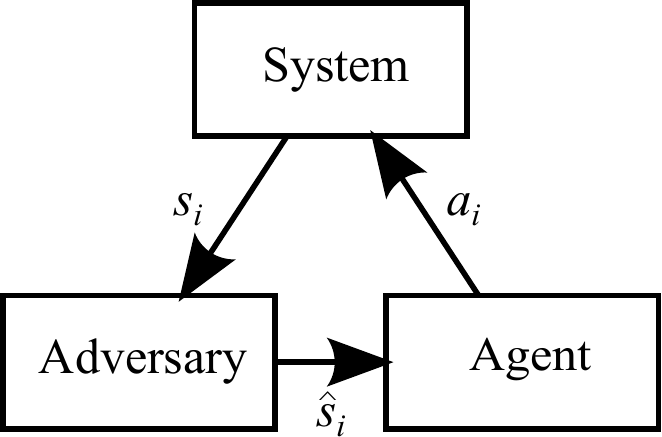}}
\caption{Considered adversarial attack on dynamic system control. In the $i$-th stage, the adversary observes the true system state $s_{i}$ and presents the manipulated state $\hat{s}_{i}$ to the agent. The agent does not have access to the true state $s_{i}$ but takes an action $a_{i}$ upon observing $\hat{s}_{i}$.}
\label{figure1}
\end{center}
\end{figure}

We consider an $N$-stage LQG control problem under adversarial attack, as illustrated in Fig.~\ref{figure1}. The system states $\{s_{i}\}_{i=1}^{N}$, the manipulated states $\{\hat{s}_{i}\}_{i=1}^{N}$, the actions $\{a_{i}\}_{i=1}^{N}$, and the instantaneous rewards $\{r_{i}\}_{i=1}^{N}$ for $1\leq i\leq N$ are described by
\begin{gather}
s_{i+1}=\alpha_{i}s_{i}+\beta_{i}a_{i}+z_{i},\textnormal{ given }\alpha_{i}\not=0,\textnormal{ }\beta_{i}\not=0;\\
\hat{s}_{i}=\pi_{i}s_{i}+c_{i};\\
a_{i}=\kappa_{i}\hat{s}_{i}+\rho_{i};\\
r_{i}=R_{i}(s_{i},a_{i})=-\theta_{i}s_{i}^{2}-\phi_{i}a_{i}^{2},\textnormal{ given }\theta_{i}>0,\textnormal{ }\phi_{i}>0;\\
S_{1}\sim b_{1}\triangleq\mathcal{N}(\mu_{1},\sigma_{1}^{2}),\textnormal{ given }\mu_{1},\textnormal{ }\sigma_{1}^{2}>0;\\
Z_{i}\sim \mathcal{N}(0,\omega_{i}^{2}),\textnormal{ given }\omega_{i}^{2}>0;\\
C_{i}\sim\mathcal{N}(0,\delta_{i}^{2}).
\end{gather}

\subsection{LQG Recapitulation}
If $\{\pi_{i}\}_{i=1}^{N}$ and $\{\delta_{i}^{2}\}_{i=1}^{N}$ are known by the agent, the above problem is a standard LQG control. In the $i$-th stage, the agent observes $\hat{s}_{i}$ but not $s_{i}$, and determines the action $a_{i}$ with the aim to maximize its expected accumulated reward. Note that it is sufficient to consider an affine function of $\hat{s}_{i}$ for $a_{i}$ in the standard LQG problem as the optimal action $a_{i}^{\star}$ is a linear function of the mean of the Gaussian posterior distribution of $S_{i}$ for the agent after observing $\{\hat{s}_{k}\}_{k=1}^{i}$ and $\{a_{k}\}_{k=1}^{i-1}$~\cite{soderstrom2002}. To compute the optimal coefficients $\kappa_{i}^{\star}$ and $\rho_{i}^{\star}$, we first define $\tilde{\theta}_{N+1}=0$, and for $1\leq i\leq N$ define $\tilde{\theta}_{i}$ as
\begin{gather}
\tilde{\theta}_{i}=\theta_{i}+\tilde{\theta}_{i+1}\alpha_{i}^{2}-\frac{\tilde{\theta}_{i+1}^{2}\alpha_{i}^{2}\beta_{i}^{2}}{\phi_{i}+\tilde{\theta}_{i+1}\beta_{i}^{2}}.
\end{gather}
Furthermore, we denote by $b_{i}\triangleq\mathcal{N}(\mu_{i},\sigma_{i}^{2})$ the belief of the agent about $S_{i}$, which is the Gaussian posterior distribution of $S_{i}$ for the agent after observing $\{\hat{s}_{k}\}_{k=1}^{i-1}$ and $\{a_{k}\}_{k=1}^{i-1}$. Then, given the manipulated state $\hat{s}_{i}$, the optimal action can be expressed as
\begin{gather}
\kappa_{i}^{\star}=-\frac{\tilde{\theta}_{i+1}\alpha_{i}\beta_{i}\pi_{i}\sigma_{i}^{2}}{(\phi_{i}+\tilde{\theta}_{i+1}\beta_{i}^{2})(\pi_{i}^{2}\sigma_{i}^{2}+\delta_{i}^{2})},\\
\rho_{i}^{\star}=-\frac{\tilde{\theta}_{i+1}\alpha_{i}\beta_{i}\mu_{i}\delta_{i}^{2}}{(\phi_{i}+\tilde{\theta}_{i+1}\beta_{i}^{2})(\pi_{i}^{2}\sigma_{i}^{2}+\delta_{i}^{2})},\\
a_{i}^{\star}=\kappa_{i}^{\star}\hat{s}_{i}+\rho_{i}^{\star}=-\frac{\tilde{\theta}_{i+1}\alpha_{i}\beta_{i}}{\phi_{i}+\tilde{\theta}_{i+1}\beta_{i}^{2}}\frac{\pi_{i}\sigma_{i}^{2}\hat{s}_{i}+\mu_{i}\delta_{i}^{2}}{\pi_{i}^{2}\sigma_{i}^{2}+\delta_{i}^{2}},
\end{gather}
where the coefficients $\kappa_{i}^{\star}$ and $\rho_{i}^{\star}$ depend on $b_{i}$; $\frac{\pi_{i}\sigma_{i}^{2}\hat{s}_{i}+\mu_{i}\delta_{i}^{2}}{\pi_{i}^{2}\sigma_{i}^{2}+\delta_{i}^{2}}$ is the mean of the Gaussian posterior distribution of $S_{i}$ for the agent after observing $\{\hat{s}_{k}\}_{k=1}^{i}$ and $\{a_{k}\}_{k=1}^{i-1}$. Note that when $\pi_{i}\equiv1$ and $\delta_{i}^{2}\equiv0$ the LQG strategy reduces to the linear quadratic regulator (LQR) strategy
\begin{align}
(\kappa_{i}^{\star},\rho_{i}^{\star})=\left(-\frac{\tilde{\theta}_{i+1}\alpha_{i}\beta_{i}}{\phi_{i}+\tilde{\theta}_{i+1}\beta_{i}^{2}},0\right).
\label{eq:LQR}
\end{align}

\subsection{Adversarial Model}
The adversary can manipulate the observation of the agent and its objective is to minimize the agent's expected accumulated reward, similar to~\cite{huang2016,russo2019}. In the $i$-th stage, the adversary chooses the manipulation parameters $\pi_{i}$, $\delta_{i}^{2}$, manipulates the state $s_{i}$ to $\hat{s}_{i}$, and then reports the manipulated state $\hat{s}_{i}$ to the agent. We consider that the adversarial manipulation is ``small", since a large manipulation may be easily detected and may also involve a high manipulation cost. Given the agent's belief $b_{i}\triangleq\mathcal{N}(\mu_{i},\sigma_{i}^{2})$, we impose the following constraints on the manipulation:
\begin{gather}
-\infty<\varepsilon'\leq\pi_{i}\leq\varepsilon<\infty;\label{eq14}\\
I(\hat{S}_{i};S_{i})=\frac{1}{2}\log\frac{\pi_{i}^{2}\sigma_{i}^{2}+\delta_{i}^{2}}{\delta_{i}^{2}}\geq\frac{1}{2}\log\lambda>0,\textnormal{ }\forall\pi_{i},\delta_{i}^{2},\label{eq15}
\end{gather}
i.e., $\frac{\pi_{i}^{2}\sigma_{i}^{2}+\delta_{i}^{2}}{\delta_{i}^{2}}\geq\lambda>1$. The mutual information constraint (\ref{eq15}) implies that the manipulated state conveys at least a certain amount of information about the system state to the agent. A larger value of $\lambda$ means a weaker adversary, and vice versa. Note that in order to satisfy the mutual information constraint, the adversary {\it cannot} use $\pi_{i}=0$. We denote by $\mathcal{A}_{i}(b_{i},\varepsilon',\varepsilon,\lambda)$ the set of feasible adversarial actions $(\pi_{i},\delta_{i}^{2})$ in the $i$-th stage subject to~(\ref{eq14})-(\ref{eq15}).
Finally, in the end of this stage, the adversary reveals $\pi_{i}$, $\delta_{i}^{2}$ to the agent\footnote{This assumption is strong but may hold in some cases. For instance, the player in the shell game reveals the cup in which the pellet is after each round.}, so as  to keep the adversarial model consistent with the standard LQG control. 

\begin{figure}[t!]
\begin{center}
\centerline{\includegraphics[scale=0.4]{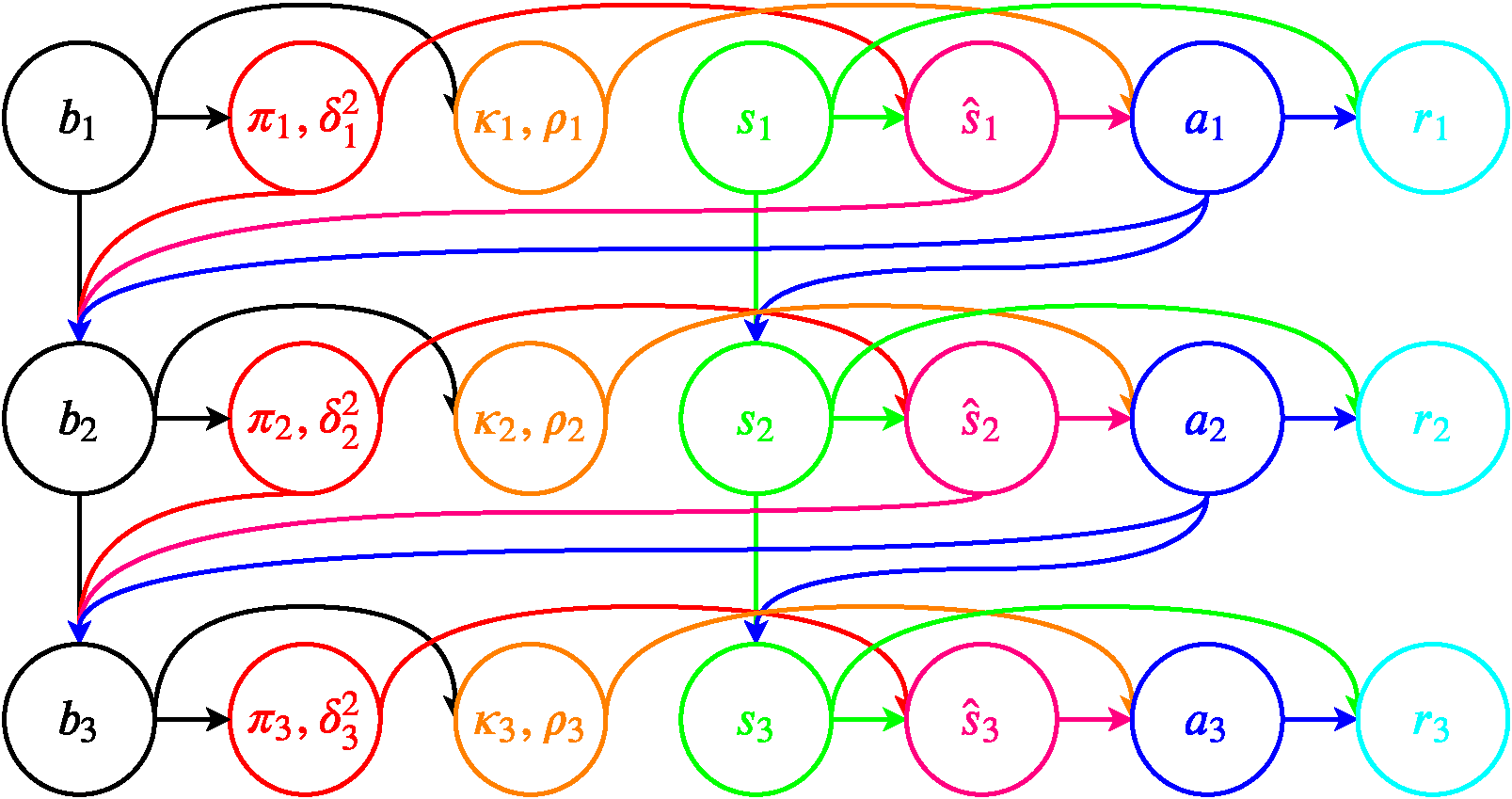}}
\caption{Illustration of information structure for a three-stage ALQG game.}
\label{figure2}
\end{center}
\end{figure}

\subsection{Adversarial LQG Control Game}
In every stage of the adversarial problem, there is a cheap talk interaction, where the adversary acts as the sender and the agent as the receiver. Different from the dynamic cheap talk game with an action-independent Markovian system~\cite{saritas2017}, we propose a novel dynamic cheap talk game to model the strategic interaction of the adversary and the agent with asymmetric information in the adversarial LQG control problem. Unlike in recent works on adversarial reinforcement learning~\cite{huang2016,lin2017,behzadan2017,russo2019}, in our model of strategic interaction the agent is aware of and can adapt to the adversary.

The game is played between the adversary and the agent over $N$ stages. In the $i$-th stage, the belief of the agent $b_{i}$ is known to the adversary and determines the action set $\mathcal{A}_{i}(b_{i},\varepsilon',\varepsilon,\lambda)$. The adversary uses a behavioral strategy $g_{i}(\pi_{i},\delta_{i}^{2}|b_{i})$ over $\mathcal{A}_{i}(b_{i},\varepsilon',\varepsilon,\lambda)$ for choosing $(\pi_{i}$, $\delta_{i}^{2})$. Then, given the observed system state $s_{i}$ it generates the manipulated state $\hat{s}_{i}$ with the probability measure $\mathcal{N}(\hat{s}_{i}|\pi_{i}s_{i},\delta_{i}^{2})$. 
The agent uses a pure strategy\footnote{The following analysis will show that it is sufficient to consider a pure agent strategy with an affine form.} $f_{i}(b_{i})$ for choosing $\kappa_{i}$ and $\rho_{i}$ based on the belief $b_{i}$, and takes the action $a_{i}=\kappa_{i}\hat{s}_{i}+\rho_{i}$ once it receives the manipulated state $\hat{s}_{i}$. Finally, the players compute the belief $b_{i+1}\triangleq\mathcal{N}(\mu_{i+1},\sigma_{i+1}^{2})$ based on the current belief $b_{i}$, the coefficient $\pi_{i}$, the variance $\delta_{i}^{2}$, the manipulated state $\hat{s}_{i}$, and the action $a_{i}$ as $\mu_{i+1}=\Lambda_{\mu}(b_{i},\pi_{i},\delta_{i}^{2},\hat{s}_{i},a_{i})$ and $\sigma_{i+1}^{2}=\Lambda_{\nu}(b_{i},\pi_{i},\delta_{i}^{2})$.

We can thus express the expected accumulated agent reward using the adversarial strategies $g^{N}\triangleq(g_{1},\dots,g_{N})$ and the agent's strategies $f^{N}\triangleq(f_{1},\dots,f_{N})$ over $N$ stages as
\begin{equation}
V\left(b_{1},g^{N},f^{N}\right)=E_{b_{1},g^{N},f^{N}}\left(\sum_{j=1}^{N}R_{j}(S_{j},A_{j})\right).
\label{eq2}
\end{equation}
Consequently, the objective of the adversary is to minimize~\eqref{eq2}, while the agent aims at maximizing it. We refer to this particular dynamic cheap talk game as adversarial LQG (ALQG) game. Fig.~\ref{figure2} illustrates a three-stage ALQG game. Our objective is to characterize SPEs in the ALQG game: the existence conditions and solution structures.

\begin{remark}
The adversarial LQG problem cannot be modeled as an SG or a POSG, since the adversary directly manipulates the observation of the agent. Furthermore, different from zero-sum SG, an SPE of the ALQG game does not necessarily exist. On the other hand, for $N=1$ the game is a cheap talk game~\cite{crawford1982}, where the strategies of both players depend on the belief of the agent and on the constraints on the adversarial manipulation. Nonetheless, in the ALQG game the reward function is different from that in~\cite{crawford1982}, which gives rise to different equilibria, as we will show later.
\label{rm2}
\end{remark}

\section{EQUILIBRIUM ANALYSIS}
\label{sec4}
In the following, we first formulate the value function and the belief update rule for the adversarial LQG problem; we then characterize SPEs in pure strategies and in behavioral strategies, respectively.

\subsection{Value Function and Belief Update}
Assume that there is an SPE consisting of strategies $(g^{N*},f^{N*})$. Induced by this SPE, we can define the value function of a subgame starting from the $i$-th stage as
\begin{equation}
V_{i}^{N}(b_{i})=V(b_{i},g_{i}^{N*},f_{i}^{N*})=E_{b_{i},g_{i}^{N*},f_{i}^{N*}}\left(\sum_{j=i}^{N}R_{j}(S_{j},A_{j})\right),
\label{eq3}
\end{equation}
i.e., the value function $V_{i}^{N}(b_{i})$ is the expected accumulated agent reward in the subgame starting from the $i$-th stage when the belief in the $i$-th stage is $b_{i}$ and the SPE strategies $(g_{i}^{N*},f_{i}^{N*})$ are used. The evaluation of $V_{i}^{N}(b_{i})$ needs the beliefs in the subgame. In the following, we specify the belief update rule.

Given the current belief $b_{i}\triangleq\mathcal{N}(\mu_{i},\sigma_{i}^{2})$, the coefficient $\pi_{i}$, the variance $\delta_{i}^{2}$, the manipulated state $\hat{s}_{i}$, and the action $a_{i}$, it follows from the adversarial LQG model and Bayes rule that $b_{i+1}\triangleq\mathcal{N}(\mu_{i+1},\sigma_{i+1}^{2})$ with
\begin{gather}
\mu_{i+1}=\Lambda_{\mu}(b_{i},\pi_{i},\delta_{i}^{2},\hat{s}_{i},a_{i})=\alpha_{i}\frac{\pi_{i}\sigma_{i}^{2}\hat{s}_{i}+\mu_{i}\delta_{i}^{2}}{\pi_{i}^{2}\sigma_{i}^{2}+\delta_{i}^{2}}+\beta_{i}a_{i},\label{eq12}\\
\sigma_{i+1}^{2}=\Lambda_{\nu}(b_{i},\pi_{i},\delta_{i}^{2})=\frac{\alpha_{i}^{2}\sigma_{i}^{2}\delta_{i}^{2}}{\pi_{i}^{2}\sigma_{i}^{2}+\delta_{i}^{2}}+\omega_{i}^{2}.\label{eq13}
\end{gather}
An immediate consequence of the belief update rule is the following.
\begin{property}
It follows from $\sigma_{1}^2>0$, the variance update rule (\ref{eq13}), and $\pi_{i}\not=0$ that $\sigma_{i}^{2}>0$ for all $1\leq i\leq N$.
\label{prp1}
\end{property}

Observe that the value functions $\{V_{i}^N\}_{i=1}^{N-1}$ have to satisfy the backward dynamic programming equation
\begin{align}
V_{i}^{N}(b_{i})=&\,\min_{g_{i}}E_{b_{i},g_{i},f_{i}^{*}}\{R_{i}(S_{i},A_{i})\nonumber\\
&\,+V_{i+1}^{N}(\mathcal{N}(\Lambda_{\mu}(b_{i},\Pi_{i},\Delta_{i}^{2},\hat{S}_{i},A_{i}),\Lambda_{\nu}(b_{i},\Pi_{i},\Delta_{i}^{2})))\}\nonumber\\
=&\,\max_{f_{i}}E_{b_{i},g_{i}^{*},f_{i}}\{R_{i}(S_{i},A_{i})\nonumber\\
&\,+V_{i+1}^{N}(\mathcal{N}(\Lambda_{\mu}(b_{i},\Pi_{i},\Delta_{i}^{2},\hat{S}_{i},A_{i}),\Lambda_{\nu}(b_{i},\Pi_{i},\Delta_{i}^{2})))\}.
\label{eq5}
\end{align}
Thus the SPE has to satisfy~\eqref{eq5}, which is the basis for the analysis we present in the following.

\subsection{Pure Strategy Equilibria}
We start the analysis considering pure strategy equilibria. With slight abuse of notation, we denote by $(\pi_{i},\delta_{i}^{2})=g_{i}(b_{i})$ a pure strategy of the adversary as a function of the belief $b_{i}$.

We first consider the case $N=1$.

\begin{proposition}
Let $N=1$. An SPE consists of $(f_{1}^{*},g_{1}^{*})$, where $(\kappa_{1}^{*},\rho_{1}^{*})=f_{1}^{*}(b_{1})=(0,0)$ for any belief $b_{1}$; and $g_{1}^{*}$ can be any adversarial strategy defined on $\mathcal{A}_{1}(b_{1},\varepsilon',\varepsilon,\lambda)$.
\end{proposition}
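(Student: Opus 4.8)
The plan is to exploit the fact that for $N=1$ the game collapses to a one-shot decision problem in which the agent has an essentially dominant action. First I would observe that with $N=1$ we have $\tilde{\theta}_{2}=0$ and the only reward is $R_{1}(s_{1},a_{1})=-\theta_{1}s_{1}^{2}-\phi_{1}a_{1}^{2}$. The crucial structural remark is that the state-cost term $-\theta_{1}s_{1}^{2}$ is beyond either player's control: $S_{1}$ is drawn from the fixed prior $b_{1}$, independently of the adversary's manipulation $(\pi_{1},\delta_{1}^{2})$ and of the agent's action. Hence the only term that can be influenced is the action-cost $-\phi_{1}a_{1}^{2}$.

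Next I would establish the agent's best response. For any received $\hat{s}_{1}$ and any posterior belief the agent may hold, it chooses $a_{1}$ to maximize $E[-\theta_{1}S_{1}^{2}-\phi_{1}a_{1}^{2}\mid\hat{s}_{1}]$; the first term does not depend on $a_{1}$, and since $\phi_{1}>0$ the second term is maximized uniquely at $a_{1}=0$. Thus $a_{1}=0$ is optimal for \emph{every} message $\hat{s}_{1}$, irrespective of the adversary's strategy, which is exactly the sequential rationality an SPE requires. To cast this in the affine form $a_{1}=\kappa_{1}\hat{s}_{1}+\rho_{1}$, I would invoke Property~\ref{prp1} together with $\pi_{1}\neq0$ to note that $\hat{S}_{1}$ has variance $\pi_{1}^{2}\sigma_{1}^{2}+\delta_{1}^{2}>0$ and is therefore non-degenerate; consequently $\kappa_{1}\hat{s}_{1}+\rho_{1}=0$ on the support of $\hat{S}_{1}$ forces $(\kappa_{1}^{*},\rho_{1}^{*})=(0,0)$. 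Equivalently, substituting $\tilde{\theta}_{2}=0$ directly into the displayed LQG coefficient expressions for $\kappa_{1}^{\star}$ and $\rho_{1}^{\star}$ yields $(0,0)$ at once.

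Finally I would verify the adversary's side and assemble the equilibrium. Substituting $A_{1}=0$ into~(\ref{eq2}) gives $V(b_{1},g_{1},f_{1}^{*})=-\theta_{1}E[S_{1}^{2}]=-\theta_{1}(\mu_{1}^{2}+\sigma_{1}^{2})$, which is independent of $(\pi_{1},\delta_{1}^{2})$. The adversary is therefore exactly indifferent across all of $\mathcal{A}_{1}(b_{1},\varepsilon',\varepsilon,\lambda)$, so any feasible $g_{1}^{*}$ minimizes its objective and is a best response. Together the two best responses satisfy the $N=1$ instance of~(\ref{eq5}), establishing that $(f_{1}^{*},g_{1}^{*})$ is an SPE. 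I do not anticipate a genuine obstacle here, as the case is degenerate; the only points demanding care are arguing optimality of $a_{1}=0$ at \emph{every} information set rather than merely on the equilibrium path, confirming the non-degeneracy step that pins down $(\kappa_{1}^{*},\rho_{1}^{*})=(0,0)$ uniquely, and checking that the adversary's indifference is exact so that literally any feasible strategy qualifies.
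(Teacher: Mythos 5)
Your proof is correct and takes essentially the same route as the paper: the paper's argument is exactly that $(\kappa_{1}^{*},\rho_{1}^{*})=(0,0)$ is a dominant strategy for the agent (since $\phi_{1}>0$ makes $a_{1}=0$ optimal at every information set), and that under this strategy the adversary's choice has no effect on the reward, so any feasible $g_{1}^{*}$ is a best response. Your additional details --- pinning down $(0,0)$ via the non-degeneracy of $\hat{S}_{1}$ and verifying the $N=1$ instance of~(\ref{eq5}) --- merely make explicit what the paper leaves implicit.
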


\begin{proof}
Observe that $(\kappa_{1}^{*},\rho_{1}^{*})=f_{1}^{*}(b_{1})=(0,0)$ is a dominant strategy for the agent for any belief $b_{1}$. Under this strategy, the adversarial strategy has no impact on the agent's reward. This proves the result.
\end{proof}

The existence of a pure strategy SPE for $N=1$ is encouraging, even if the equilibrium is degenerate. Unfortunately, for $N\geq2$ an SPE may not exist as shown in the following theorem.

\begin{theorem}
Let $N\geq2$. If $\varepsilon'\not=\varepsilon$ or if $\varepsilon'=\varepsilon=0$, then there is no pure strategy SPE for the ALQG game. If $\varepsilon'=\varepsilon\not=0$, then there is a unique pure strategy SPE. The SPE strategies for $1\leq i\leq N$ are given by
\begin{gather}
\tilde{\theta}_{N+1}=\hat{\theta}_{N+1}=0;\label{eq21}\\
\tilde{\theta}_{i}=\theta_{i}+\tilde{\theta}_{i+1}\alpha_{i}^{2}-\frac{\tilde{\theta}_{i+1}^{2}\alpha_{i}^{2}\beta_{i}^{2}}{\phi_{i}+\tilde{\theta}_{i+1}\beta_{i}^{2}};\label{eq22}\\
\hat{\theta}_{i}=\theta_{i}+\hat{\theta}_{i+1}\alpha_{i}^{2}-\left(\frac{\tilde{\theta}_{i+1}^{2}\alpha_{i}^{2}\beta_{i}^{2}}{\phi_{i}+\tilde{\theta}_{i+1}\beta_{i}^{2}}+(\hat{\theta}_{i+1}-\tilde{\theta}_{i+1})\alpha_{i}^{2}\right)\frac{\lambda-1}{\lambda};\label{eq23}\\
\pi_{i}^{*}=g_{i}^{*}(b_{i})=\varepsilon'=\varepsilon;\\
\delta_{i}^{2*}=g_{i}^{*}(b_{i})=\frac{\varepsilon^{2}\sigma_{i}^{2}}{\lambda-1};\label{eq25}\\
\kappa_{i}^{*}=f_{i}^{*}(b_{i})=-\frac{\tilde{\theta}_{i+1}\alpha_{i}\beta_{i}(\lambda-1)}{(\phi_{i}+\tilde{\theta}_{i+1}\beta_{i}^{2})\lambda\varepsilon};\\
\rho_{i}^{*}=f_{i}^{*}(b_{i})=-\frac{\tilde{\theta}_{i+1}\alpha_{i}\beta_{i}\mu_{i}}{(\phi_{i}+\tilde{\theta}_{i+1}\beta_{i}^{2})\lambda}.
\end{gather}
\label{th1}
\end{theorem}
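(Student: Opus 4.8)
The plan is to proceed by backward induction on the dynamic-programming equation~\eqref{eq5}, carrying the ansatz that each value function is quadratic in the belief parameters, $V_i^N(b_i)=-\tilde\theta_i\mu_i^2-\hat\theta_i\sigma_i^2-\eta_i$ with $\eta_i$ a belief-independent constant collecting the process-noise penalties. The terminal condition $V_{N+1}^N\equiv0$ gives~\eqref{eq21}, and at stage $N$ the agent plays $(\kappa_N,\rho_N)=(0,0)$ exactly as in the degenerate $N=1$ equilibrium, so $\tilde\theta_N=\hat\theta_N=\theta_N$. The key structural remark I will exploit is that the game is zero-sum---both players optimize the same $V$ of~\eqref{eq2}---so a pure-strategy SPE is exactly a stagewise pure saddle point of the one-stage objective $J_i\triangleq E\{R_i(S_i,A_i)+V_{i+1}^N(b_{i+1})\}$, which can be studied at each belief $b_i$ in isolation. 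The first, routine step is to expand $J_i$ with the Gaussian moments $E[\hat S_i]=\pi_i\mu_i$, $\mathrm{Var}(\hat S_i)=\pi_i^2\sigma_i^2+\delta_i^2$ and the belief update~\eqref{eq12}--\eqref{eq13}; this exhibits $J_i$ as a strictly concave quadratic in $(\kappa_i,\rho_i)$ whose maximizer is the LQG response recalled in the recapitulation.

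For the existence direction I would first dispose of $\varepsilon'=\varepsilon=0$: the constraint~\eqref{eq15} requires $\pi_i^2\sigma_i^2\geq(\lambda-1)\delta_i^2>0$, which is infeasible at $\pi_i=0$, so $\mathcal A_i$ is empty and no SPE exists. When $\varepsilon'=\varepsilon\neq0$ the adversary's only free variable is $\delta_i^2$. Substituting the agent's LQG response reduces $J_i$ to the usual LQG value, which is decreasing in the posterior variance $P_i=\sigma_i^2\delta_i^2/(\pi_i^2\sigma_i^2+\delta_i^2)$; since $P_i$ is strictly increasing in $\delta_i^2$, the adversary drives $\delta_i^2$ to its boundary, so~\eqref{eq15} binds and~\eqref{eq25} with $P_i=\sigma_i^2/\lambda$ follows. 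Feeding $\pi_i=\varepsilon$ and this $\delta_i^2$ into the LQG formulas reproduces the asserted $\kappa_i^*$ and $\rho_i^*$. It then remains to close the induction: splitting $\sigma_i^2=P_i+\mathrm{Var}(\hat\mu_i)$ with $\mathrm{Var}(\hat\mu_i)=\tfrac{\lambda-1}{\lambda}\sigma_i^2$ and collecting the coefficients of $\mu_i^2$ and $\sigma_i^2$ in $J_i$ yields precisely the Riccati recursion~\eqref{eq22} for $\tilde\theta_i$ and~\eqref{eq23} for $\hat\theta_i$. Uniqueness is immediate, since the agent's maximizer is unique by strict concavity and the adversary's minimizer is the unique boundary point.

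The non-existence claim for $\varepsilon'\neq\varepsilon$ is the crux, and I would argue it through the following mechanism. Once~\eqref{eq15} binds one finds $\mathrm{Var}(\hat\mu_i)=\tfrac{\lambda-1}{\lambda}\sigma_i^2$ and $P_i=\sigma_i^2/\lambda$ \emph{independently of} $\pi_i$; hence along the diagonal on which the agent best-responds to the realized $\pi_i$ (which scales the gain like $1/\pi_i$), the value $J_i$ does not depend on $\pi_i$ at all. Now fix a candidate $\pi_i^*$ together with the matching agent response $(\kappa_i^*,\rho_i^*)$. For every other $\pi_i$ the frozen agent is mismatched, so $J_i(\kappa_i^*,\rho_i^*,\pi_i)\leq J_i(\text{best response},\pi_i)=\mathrm{const}$, with equality at $\pi_i^*$; thus $\pi_i\mapsto J_i$ is maximized at $\pi_i^*$. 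A short computation shows that, with~\eqref{eq15} binding and the agent frozen, $J_i$ is an exactly quadratic function of $\pi_i$ whose leading coefficient is a negative multiple of $(\kappa_i^*)^2$, so it is a strictly concave parabola with apex at $\pi_i^*$. The adversary, who minimizes $J_i$, therefore strictly prefers the endpoint of $[\varepsilon',\varepsilon]$ farthest from $\pi_i^*$; this endpoint differs from $\pi_i^*$ whenever $\varepsilon'\neq\varepsilon$, so no candidate $\pi_i^*$ is a fixed point of the best-response map and no pure saddle point can exist. Strict concavity needs $\kappa_i^*\neq0$, i.e.\ $\tilde\theta_{i+1}>0$, which holds for all $i\leq N-1$ by~\eqref{eq22}; this is precisely why the obstruction first appears at $N\geq2$.

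The step I expect to be the main obstacle is the clean derivation of the concave-parabola-with-apex-at-$\pi_i^*$ statement: it hinges on the $\pi_i$-independence of the posterior variance under the binding constraint and on checking that the deviation to the farthest endpoint, taken along the constraint-binding curve, is both feasible and strictly profitable for the adversary even though the agent's gain stays frozen. The companion difficulty is the coefficient bookkeeping that must turn the collected $\sigma_i^2$-terms into exactly~\eqref{eq23}; after substituting $\mathrm{Var}(\hat\mu_i)=\tfrac{\lambda-1}{\lambda}\sigma_i^2$ this reduces to an algebraic identity relating the Riccati increment $\tilde\theta_i-\theta_i$ to~\eqref{eq23}. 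Everything else---the concave quadratic optimization for the agent and the emptiness of $\mathcal A_i$ when $\varepsilon'=\varepsilon=0$---is routine.
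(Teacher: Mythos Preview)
Your proposal is correct and follows essentially the same backward-induction/Q-function route as the paper: expand the one-stage objective, substitute the agent's concave-quadratic best responses in $\rho_i$ and $\kappa_i$, observe that the adversary's best response in $\delta_i^2$ always pushes to the binding constraint, and then reduce to a two-variable $(\pi_i,\kappa_i)$ saddle-point problem. The paper's non-existence argument is the bare statement that $Q(\pi_i^*,\kappa_i^*)>Q(\pi_i,\kappa_i^*)$ for every $\pi_i\neq\pi_i^*$; your geometric rephrasing---that the agent-best-response value is $\pi_i$-independent along the binding curve, hence the frozen-agent parabola has its apex exactly at $\pi_i^*$, so the minimizing adversary runs to an endpoint---is the same computation viewed structurally and is arguably more illuminating. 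One small point to tighten in your write-up: the ``$\delta_i^2$ binds at any saddle point'' step, which you state only in the existence case, must be invoked \emph{before} the non-existence parabola argument (otherwise the apex need not sit at $\pi_i^*$); the paper handles this by noting that the full $Q$-function with frozen $(\kappa_i^*,\rho_i^*)$ and $\kappa_i^*\neq0$ is strictly decreasing in $\delta_i^2$, which is exactly the check you describe.
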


\begin{corollary}
If $\varepsilon'=\varepsilon\not=0$, the value function induced by the unique pure strategy SPE is
\begin{gather}
V_{i}^{N}(b_{i})=-\tilde{\theta}_{i}\mu_{i}^{2}-\hat{\theta}_{i}\sigma_{i}^{2}-\sum_{j=i+1}^{N}\hat{\theta}_{j}\omega_{j-1}^{2}.\label{eq28}
\end{gather}
\label{cor1}
\end{corollary}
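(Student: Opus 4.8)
The plan is to establish \eqref{eq28} by backward induction on the stage index $i$, running from $i=N$ down to $i=1$, using the closed-form equilibrium strategies of Theorem~\ref{th1} together with the one-step dynamic programming identity
\begin{equation*}
V_{i}^{N}(b_{i})=E_{b_{i},g_{i}^{*},f_{i}^{*}}\left\{R_{i}(S_{i},A_{i})+V_{i+1}^{N}(b_{i+1})\right\},
\end{equation*}
where $b_{i+1}$ is obtained from the belief-update rules \eqref{eq12}-\eqref{eq13}. For the base case $i=N$ I would use $\tilde{\theta}_{N+1}=\hat{\theta}_{N+1}=0$, which forces $\kappa_{N}^{*}=\rho_{N}^{*}=0$ and $\tilde{\theta}_{N}=\hat{\theta}_{N}=\theta_{N}$; then $a_{N}=0$, $R_{N}=-\theta_{N}S_{N}^{2}$, and $E_{b_{N}}[S_{N}^{2}]=\mu_{N}^{2}+\sigma_{N}^{2}$ yield $V_{N}^{N}(b_{N})=-\theta_{N}\mu_{N}^{2}-\theta_{N}\sigma_{N}^{2}$, matching \eqref{eq28} with the empty sum.

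For the inductive step I assume \eqref{eq28} at stage $i+1$. The first ingredient is to substitute the equilibrium manipulation $(\pi_{i}^{*},\delta_{i}^{2*})=(\varepsilon,\varepsilon^{2}\sigma_{i}^{2}/(\lambda-1))$ into \eqref{eq13}; this collapses the denominator $\pi_{i}^{2}\sigma_{i}^{2}+\delta_{i}^{2}$ to $\varepsilon^{2}\sigma_{i}^{2}\lambda/(\lambda-1)$ and gives the deterministic update $\sigma_{i+1}^{2}=\alpha_{i}^{2}\sigma_{i}^{2}/\lambda+\omega_{i}^{2}$. The second ingredient is the law of the posterior mean $M_{i}=(\pi_{i}\sigma_{i}^{2}\hat{S}_{i}+\mu_{i}\delta_{i}^{2})/(\pi_{i}^{2}\sigma_{i}^{2}+\delta_{i}^{2})$: under the equilibrium parameters this reduces to $M_{i}=(\lambda-1)\hat{S}_{i}/(\lambda\varepsilon)+\mu_{i}/\lambda$, and using $\hat{S}_{i}=\varepsilon S_{i}+C_{i}$ with $S_{i}\sim\mathcal{N}(\mu_{i},\sigma_{i}^{2})$, $C_{i}\sim\mathcal{N}(0,\delta_{i}^{2})$, a short computation gives $E[M_{i}]=\mu_{i}$ and $\mathrm{Var}(M_{i})=(\lambda-1)\sigma_{i}^{2}/\lambda$, hence $E[M_{i}^{2}]=\mu_{i}^{2}+(\lambda-1)\sigma_{i}^{2}/\lambda$.

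I would then write the equilibrium action as $A_{i}=K_{i}M_{i}$ with $K_{i}=-\tilde{\theta}_{i+1}\alpha_{i}\beta_{i}/(\phi_{i}+\tilde{\theta}_{i+1}\beta_{i}^{2})$ and the next-stage mean as $\mu_{i+1}=G_{i}M_{i}$ with $G_{i}=\alpha_{i}+\beta_{i}K_{i}=\alpha_{i}\phi_{i}/(\phi_{i}+\tilde{\theta}_{i+1}\beta_{i}^{2})$. Substituting the inductive hypothesis $V_{i+1}^{N}(b_{i+1})=-\tilde{\theta}_{i+1}\mu_{i+1}^{2}-\hat{\theta}_{i+1}\sigma_{i+1}^{2}-\sum_{j=i+2}^{N}\hat{\theta}_{j}\omega_{j-1}^{2}$ into the DP identity and taking expectations via $E[S_{i}^{2}]=\mu_{i}^{2}+\sigma_{i}^{2}$, $E[A_{i}^{2}]=K_{i}^{2}E[M_{i}^{2}]$, and $E[\mu_{i+1}^{2}]=G_{i}^{2}E[M_{i}^{2}]$ produces a quadratic in $\mu_{i}$ and $\sigma_{i}$. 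The noise terms combine as $-\hat{\theta}_{i+1}\omega_{i}^{2}-\sum_{j=i+2}^{N}\hat{\theta}_{j}\omega_{j-1}^{2}=-\sum_{j=i+1}^{N}\hat{\theta}_{j}\omega_{j-1}^{2}$, reproducing the required sum.

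The main obstacle, and the crux of the verification, is to show that the collected coefficients of $\mu_{i}^{2}$ and $\sigma_{i}^{2}$ coincide with $\tilde{\theta}_{i}$ and $\hat{\theta}_{i}$ from \eqref{eq22}-\eqref{eq23}. For $\mu_{i}^{2}$ I expect $\theta_{i}+\phi_{i}K_{i}^{2}+\tilde{\theta}_{i+1}G_{i}^{2}=\tilde{\theta}_{i}$; inserting the explicit $K_{i},G_{i}$ gives $\phi_{i}K_{i}^{2}+\tilde{\theta}_{i+1}G_{i}^{2}=\tilde{\theta}_{i+1}\alpha_{i}^{2}-\tilde{\theta}_{i+1}^{2}\alpha_{i}^{2}\beta_{i}^{2}/(\phi_{i}+\tilde{\theta}_{i+1}\beta_{i}^{2})$, which is exactly the standard LQG Riccati recursion \eqref{eq22}. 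For $\sigma_{i}^{2}$ the weighting differs: the $\phi_{i}K_{i}^{2}$ and $\tilde{\theta}_{i+1}G_{i}^{2}$ terms are scaled by $(\lambda-1)/\lambda$ coming from $\mathrm{Var}(M_{i})$, while the propagated uncertainty term contributes $\hat{\theta}_{i+1}\alpha_{i}^{2}/\lambda$ coming from $\sigma_{i+1}^{2}$. The remaining task is to check
\begin{equation*}
\theta_{i}+\left(\tilde{\theta}_{i+1}\alpha_{i}^{2}-\frac{\tilde{\theta}_{i+1}^{2}\alpha_{i}^{2}\beta_{i}^{2}}{\phi_{i}+\tilde{\theta}_{i+1}\beta_{i}^{2}}\right)\frac{\lambda-1}{\lambda}+\hat{\theta}_{i+1}\frac{\alpha_{i}^{2}}{\lambda}=\hat{\theta}_{i},
\end{equation*}
which follows by rearranging \eqref{eq23} after noting $\hat{\theta}_{i+1}\alpha_{i}^{2}-\hat{\theta}_{i+1}\alpha_{i}^{2}(\lambda-1)/\lambda=\hat{\theta}_{i+1}\alpha_{i}^{2}/\lambda$. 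Once both coefficient identities hold, the induction closes and \eqref{eq28} is established.
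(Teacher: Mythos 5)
Your proposal is correct and follows essentially the same route as the paper: backward dynamic programming from stage $N$, substituting the equilibrium strategies of Theorem~\ref{th1} and matching the resulting coefficients of $\mu_{i}^{2}$ and $\sigma_{i}^{2}$ against the recursions \eqref{eq22}--\eqref{eq23}. The only difference is organizational --- the paper derives the strategies and the value function jointly via best-response analysis of the Q-function, whereas you condition on Theorem~\ref{th1} and verify \eqref{eq28} by induction through the posterior mean $M_{i}$ and its moments, which is a perfectly valid (and arguably cleaner) way to establish the corollary alone.
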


The proofs of Theorem~\ref{th1} and Corollary~\ref{cor1} are provided in the appendix.

\begin{property}
It follows from (\ref{eq21})-(\ref{eq23}) that for $1\leq i\leq N$,
\begin{align}
\hat{\theta}_{i}\geq\tilde{\theta}_{i}>0.
\end{align}
\end{property}

\begin{remark}
We can make the following observations based on Theorem~\ref{th1} and Corollary~\ref{cor1}.
\begin{itemize}
\item Since the LQG control can be seen as the best response of any given (adversarial manipulated) observation model, it is sufficient to consider a pure agent strategy in the form of an affine function for an SPE of the ALQG game with a pure adversarial strategy.
\item It follows from~\eqref{eq25} that a rational adversary will always apply a manipulation with the largest variance.
\item The value function $V_{i}^{N}$ consists of a constant term and two separable terms depending on the mean $\mu_{i}$ and the variance $\sigma_{i}^{2}$ of the belief, which allows a closed form solution for arbitrary $N$.
\end{itemize}
\end{remark}

{\bf Time-invariant system:} We now turn to the asymptotic analysis of a time-invariant system, i.e., $\alpha_{i}=\alpha\not=0$, $\beta_{i}=\beta\not=0$, $\omega_{i}^{2}=\omega^{2}>0$, $\theta_{i}=\theta>0$, and $\phi_{i}=\phi>0$ for $i\geq1$, and we let $N\to\infty$. Let us define the mapping $L:\mathbb{R}^{2}_{\geq0}\to\mathbb{R}^{2}_{\geq0}$ as
\begin{align}
L\left(x,y\right)=\left(\theta+\frac{\phi\alpha^{2}x}{\phi+\beta ^{2}x},\theta+\frac{\phi\alpha^{2}x}{\phi+\beta ^{2}x}\frac{\lambda-1}{\lambda}+\alpha^{2}y\frac{1}{\lambda}\right).
\label{def:Lmapping}
\end{align}
Observe that $L$ is effectively the coefficient update (\ref{eq21})-(\ref{eq23}) for the time-invariant model. In what follows, we first characterize $L$ and furthermore the pure strategy SPE of the ALQG game in the asymptotic regime.

\begin{proposition}
Let $\lambda>\alpha^2$. Then the mapping $L$ admits a least fixed point $(\tilde{\theta},\hat{\theta})\in\mathbb{R}_{\geq0}^{2}$, for which
\begin{equation}
\lim_{n\to\infty}L^{n}(0,0)=L(\tilde{\theta},\hat{\theta})=(\tilde{\theta},\hat{\theta}),
\end{equation}
with $L^{n}(0,0)\triangleq \underbrace{L(L(\cdots(L(L}_{n\textnormal{ }L\textnormal{-mappings}}(0,0)))\cdots))$.
\label{pro2}
\end{proposition}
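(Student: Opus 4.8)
The plan is to exploit the triangular structure of $L$: its first coordinate $L_{1}(x)=\theta+\frac{\phi\alpha^2 x}{\phi+\beta^2 x}$ depends on $x$ alone, while its second coordinate $L_{2}(x,y)=\theta+\frac{\phi\alpha^2 x}{\phi+\beta^2 x}\frac{\lambda-1}{\lambda}+\frac{\alpha^2}{\lambda}y$ is affine and increasing in $y$. Writing $L^{n}(0,0)=(x_{n},y_{n})$, the $x$-component evolves autonomously as $x_{n+1}=L_{1}(x_{n})$ and can be treated first, its limit then feeding into the $y$-recursion. Along the way I would record two structural facts: $L$ maps $\mathbb{R}^2_{\geq0}$ into itself (indeed into $\mathbb{R}^2_{>0}$, since each coordinate is at least $\theta>0$), and $L$ is monotone for the componentwise order, because $\frac{\phi\alpha^2 x}{\phi+\beta^2 x}$ is increasing in $x\geq0$ and the coefficients $\frac{\lambda-1}{\lambda},\frac{\alpha^2}{\lambda}$ are nonnegative.

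For the $x$-dynamics I would note that $L_{1}$ is increasing, concave and bounded above on $[0,\infty)$ (its derivative $\frac{\phi^2\alpha^2}{(\phi+\beta^2 x)^2}$ is positive and decreasing), with $L_{1}(0)=\theta>0$. The fixed-point equation $\tilde{\theta}=L_{1}(\tilde{\theta})$ reduces to the quadratic $\beta^2\tilde{\theta}^2+(\phi-\theta\beta^2-\phi\alpha^2)\tilde{\theta}-\theta\phi=0$, which has exactly one nonnegative root since its constant term $-\theta\phi$ is negative. Starting from $x_{0}=0$, monotonicity of $L_{1}$ gives that $(x_{n})$ is increasing, and an induction bounds it above by that root; hence $(x_{n})$ converges, and by continuity its limit is the smallest nonnegative fixed point $\tilde{\theta}$ of $L_{1}$. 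This is the classical LQR Riccati convergence argument.

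For the $y$-dynamics I would invoke the hypothesis $\lambda>\alpha^2$, which makes the contraction factor $\frac{\alpha^2}{\lambda}<1$. Setting $c_{n}=\theta+\frac{\phi\alpha^2 x_{n}}{\phi+\beta^2 x_{n}}\frac{\lambda-1}{\lambda}$, the recursion reads $y_{n+1}=c_{n}+\frac{\alpha^2}{\lambda}y_{n}$ with $c_{n}\uparrow c\triangleq\theta+(\tilde{\theta}-\theta)\frac{\lambda-1}{\lambda}$. Because $x_{n}$ is increasing, $c_{n}$ is increasing, and an easy induction shows $(y_{n})$ is increasing and bounded above by $\hat{\theta}\triangleq\frac{c\lambda}{\lambda-\alpha^2}$, the fixed point of $t\mapsto c+\frac{\alpha^2}{\lambda}t$. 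Monotone convergence then yields $y_{n}\to y^{*}$, and passing to the limit (using $c_{n}\to c$) gives $y^{*}=c+\frac{\alpha^2}{\lambda}y^{*}$, so $y^{*}=\hat{\theta}$. Thus $L^{n}(0,0)\to(\tilde{\theta},\hat{\theta})$, and continuity of $L$ makes the limit a fixed point, i.e. $L(\tilde{\theta},\hat{\theta})=(\tilde{\theta},\hat{\theta})$.

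To upgrade this to the \emph{least} fixed point I would use monotonicity of $L$: since $(0,0)$ lies below every point of $\mathbb{R}^2_{\geq0}$, for any fixed point $(x',y')$ we get $L^{n}(0,0)\leq L^{n}(x',y')=(x',y')$ for all $n$, and letting $n\to\infty$ yields $(\tilde{\theta},\hat{\theta})\leq(x',y')$ componentwise. The main obstacle is the coupled convergence in the third step: the $y$-iteration is driven by the still-varying $x_{n}$ rather than by $\tilde{\theta}$ itself, so it cannot be read off immediately as a fixed affine contraction. This is exactly where $\lambda>\alpha^2$ is essential, since it supplies a contraction factor uniform in $n$, which together with the monotone boundedness of $(c_{n})$ forces $y_{n}$ to track the limiting fixed point $\hat{\theta}$.
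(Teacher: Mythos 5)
Your proposal is correct, but it takes a genuinely different route from the paper. The paper's proof is two sentences long: it observes that $L$ is order-preserving on $(\mathbb{R}^{2}_{\geq0},\preccurlyeq)$, asserts (without derivation) that the fixed point equation $L(\tilde{\theta},\hat{\theta})=(\tilde{\theta},\hat{\theta})$ has a unique solution on $\mathbb{R}_{\geq0}^{2}$ if and only if $\lambda>\alpha^{2}$, and then cites Kleene's fixed point theorem to get $\lim_{n\to\infty}L^{n}(0,0)=(\tilde{\theta},\hat{\theta})$. You instead exploit the triangular structure: the autonomous Riccati recursion $x_{n+1}=L_{1}(x_{n})$ is handled by the classical monotone-bounded argument (with $\tilde{\theta}$ identified as the unique nonnegative root of $\beta^{2}\tilde{\theta}^{2}+(\phi-\theta\beta^{2}-\phi\alpha^{2})\tilde{\theta}-\theta\phi=0$), and the driven affine recursion $y_{n+1}=c_{n}+\frac{\alpha^{2}}{\lambda}y_{n}$ is handled by monotone convergence against the explicit upper bound $\hat{\theta}=\frac{c\lambda}{\lambda-\alpha^{2}}$; leastness then follows from order-preservation, exactly as in the Knaster--Tarski/Kleene philosophy but proved by hand. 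What each approach buys: the paper's argument is short and transfers verbatim to the mapping $J$ in Proposition~\ref{pro3}, but it leaves real work to the reader --- $\mathbb{R}^{2}_{\geq0}$ is not chain-complete, so Kleene's theorem only applies once one knows the iterates $L^{n}(0,0)$ stay in a bounded (chain-complete) region, and establishing that bound is precisely where $\lambda>\alpha^{2}$ enters; the unproved ``unique solution iff $\lambda>\alpha^{2}$'' claim is also nontrivial. Your proof supplies exactly these missing ingredients: it localizes the role of $\lambda>\alpha^{2}$ (the contraction factor $\frac{\alpha^{2}}{\lambda}<1$ in the second coordinate), gives closed forms for $\tilde{\theta}$ and $\hat{\theta}$, and in fact yields \emph{uniqueness} of the fixed point on $\mathbb{R}^{2}_{\geq0}$ (one root for the quadratic, then a nondegenerate affine equation for $y$), which is strictly stronger than the leastness stated in the proposition. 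The only cosmetic remark is that, given this uniqueness, your separate leastness argument via $L^{n}(0,0)\preccurlyeq L^{n}(x',y')=(x',y')$ is redundant, though correct; and your argument adapts unchanged to $J$, so nothing is lost relative to the paper's more abstract route.
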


\begin{proof}
We start the proof by observing that the mapping $L$ is order-preserving. That is, for all $(x,y)$, $(x',y')\in\mathbb{R}^{2}_{\geq0}$ satisfying $(x,y)\preccurlyeq(x',y')$, i.e., $x\leq x'$ and $y\leq y'$, we have $L(x,y)\preccurlyeq L(x',y')$. This can be easily shown by analyzing~\eqref{def:Lmapping}.

Furthermore, the fixed point equation $L(\tilde{\theta},\hat{\theta})=(\tilde{\theta},\hat{\theta})$ has a unique solution on $\mathbb{R}_{\geq0}^{2}$ if and only if $\lambda>\alpha^2$. Since $L$ is order-preserving, the convergence result $\lim_{n\to\infty}L^{n}(0,0)=(\tilde{\theta},\hat{\theta})$ follows from Kleene's fixed point theorem~\cite{baranga1991}.
\end{proof}

Analytical expressions for $\tilde{\theta}$ and $\hat{\theta}$ can be obtained by solving the fixed point equation $L(\tilde{\theta},\hat{\theta})=(\tilde{\theta},\hat{\theta})$, and can be used for characterizing the SPE in pure strategies, using Theorem~\ref{th1} and Proposition~\ref{pro2}, as follows.

\begin{theorem}
Let $\lambda>\alpha^2$, $\varepsilon'=\varepsilon\not=0$, and $N\to\infty$. Then the ALQG game of the time-invariant model has a stationary SPE in pure strategies as: For $i\geq1$,
\begin{gather}
\pi_{i}^{*}=g_{i}^{*}(b_{i})=\varepsilon;\\
\delta_{i}^{2*}=g_{i}^{*}(b_{i})=\frac{\varepsilon^{2}\sigma_{i}^{2}}{\lambda-1};\\
\kappa_{i}^{*}=f_{i}^{*}(b_{i})=-\frac{\tilde{\theta}\alpha\beta(\lambda-1)}{(\phi+\tilde{\theta}\beta^{2})\lambda\varepsilon};\\
\rho_{i}^{*}=f_{i}^{*}(b_{i})=-\frac{\tilde{\theta}\alpha\beta\mu_{i}}{(\phi+\tilde{\theta}\beta^{2})\lambda}.
\end{gather}
\label{th2}
\end{theorem}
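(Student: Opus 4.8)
The plan is to obtain the stationary SPE as the limit of the finite-horizon SPE characterized in Theorem~\ref{th1}, using the fixed-point structure established in Proposition~\ref{pro2}. The first step is to observe that for the time-invariant model the backward coefficient recursion (\ref{eq22})--(\ref{eq23}) is exactly the mapping $L$ of (\ref{def:Lmapping}) applied in reverse time: a short simplification of (\ref{eq22})--(\ref{eq23}) shows that $(\tilde{\theta}_{i},\hat{\theta}_{i})=L(\tilde{\theta}_{i+1},\hat{\theta}_{i+1})$, with the terminal condition $(\tilde{\theta}_{N+1},\hat{\theta}_{N+1})=(0,0)$ from (\ref{eq21}). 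Hence, writing $(\tilde{\theta}_{i}^{(N)},\hat{\theta}_{i}^{(N)})$ for the coefficients of the $N$-stage game, we have $(\tilde{\theta}_{i}^{(N)},\hat{\theta}_{i}^{(N)})=L^{N+1-i}(0,0)$.

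Second, I would fix an arbitrary stage $i\geq1$ counted from the start of the game and let $N\to\infty$, so that the number of iterations $N+1-i$ diverges. Since $\lambda>\alpha^{2}$, Proposition~\ref{pro2} applies and yields $L^{N+1-i}(0,0)\to(\tilde{\theta},\hat{\theta})$, the least fixed point of $L$. Substituting these limiting coefficients for $\tilde{\theta}_{i+1}$ in the SPE strategy formulas of Theorem~\ref{th1} (specialized to $\varepsilon'=\varepsilon\neq0$ and the time-invariant parameters) reproduces the four displayed expressions for $\pi_{i}^{*}$, $\delta_{i}^{2*}$, $\kappa_{i}^{*}$, and $\rho_{i}^{*}$; all stage indices drop out except for the belief-dependent $\mu_{i}$ and $\sigma_{i}^{2}$, so the resulting profile is stationary, i.e. a function of $b_{i}$ alone.

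Third, it remains to certify that this stationary profile is indeed an SPE of the infinite-horizon game rather than merely a pointwise limit of finite-horizon equilibria. Here I would avoid passing the equilibrium condition through the limit and instead exploit the self-consistency $L(\tilde{\theta},\hat{\theta})=(\tilde{\theta},\hat{\theta})$ directly: feeding the continuation coefficients $(\tilde{\theta},\hat{\theta})$ into the single-stage backward-programming argument used to prove Theorem~\ref{th1} reproduces the very same coefficients and the very same minimizing/maximizing strategies. This means the stationary strategies satisfy the dynamic programming optimality condition (\ref{eq5}) identically across stages, so neither player has a profitable one-stage deviation in any subgame; by the one-shot-deviation principle this gives a stationary SPE.

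The main obstacle I anticipate is making the infinite-horizon equilibrium notion precise, because the accumulated reward diverges: by Corollary~\ref{cor1} the value $V_{i}^{N}(b_{i})$ contains the term $-\sum_{j=i+1}^{N}\hat{\theta}_{j}\omega_{j-1}^{2}$, which in the time-invariant case behaves like $-(N-i)\hat{\theta}\omega^{2}\to-\infty$. Consequently the equilibrium cannot be verified against a finite total-reward objective, and the verification must rest on the per-stage best-response (one-shot-deviation) characterization enabled by the fixed-point property, rather than on comparing total payoffs. Handling this divergence cleanly, and confirming that the one-shot-deviation principle is legitimate in this undiscounted setting, is the delicate part; the analytic heavy lifting, namely the existence and convergence of the fixed point, has already been discharged by Proposition~\ref{pro2}.
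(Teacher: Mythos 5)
Your proposal is correct and follows essentially the same route as the paper: the paper's entire proof consists of your first two steps, namely invoking Theorem~\ref{th1} for the unique finite-horizon pure-strategy SPE (using $\varepsilon'=\varepsilon\not=0$) and then Proposition~\ref{pro2} (using $\lambda>\alpha^{2}$) to conclude that $\tilde{\theta}_{i}$, $\hat{\theta}_{i}$ converge to the least fixed point $(\tilde{\theta},\hat{\theta})$, so the strategy formulas of Theorem~\ref{th1} become stationary. Your third step and the observation that the accumulated reward diverges go beyond the paper, which stops at the limiting argument and never makes precise in what sense the limit profile is an equilibrium of the infinite-horizon game; your one-shot-deviation verification via $L(\tilde{\theta},\hat{\theta})=(\tilde{\theta},\hat{\theta})$ is a sensible way to close that gap, and it flags a subtlety that the paper's own proof silently ignores.
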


\begin{proof}
Since $\varepsilon'=\varepsilon\not=0$, a unique SPE in pure strategies exists and the SPE strategies are given in Theorem~\ref{th1}. Since $\lambda>\alpha^2$ and $N\to\infty$, it follows from Proposition~\ref{pro2} that $\tilde{\theta_{i}}$, $\hat{\theta}_{i}$ converge to $\tilde{\theta}$, $\hat{\theta}$, respectively. This leads to the stationary SPE in Theorem~\ref{th2}.
\end{proof}

Interestingly, for this stationary SPE in pure strategies we can obtain the expected average reward per stage in steady state in closed form.

\begin{corollary}
Let $b_{1}\triangleq\mathcal{N}(\mu_{1},\sigma_{1}^{2})$ with {\it bounded} mean and variance. For the stationary SPE in pure strategies in Theorem~\ref{th2}, the expected average reward per stage in steady state is independent of the initial belief and is given by
\begin{equation}
\lim_{N\to\infty}\frac{V_{1}^{N}(b_{1})}{N}=-\hat{\theta}\omega^{2}.
\end{equation}
\label{cor2}
\end{corollary}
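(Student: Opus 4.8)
The plan is to start from the closed-form value function supplied by Corollary~\ref{cor1} and exploit the convergence of the coefficient iterates established in Proposition~\ref{pro2}. Specializing~\eqref{eq28} to $i=1$ in the time-invariant model, where $\omega_{j}^{2}\equiv\omega^{2}$, gives
\[
V_{1}^{N}(b_{1})=-\tilde{\theta}_{1}\mu_{1}^{2}-\hat{\theta}_{1}\sigma_{1}^{2}-\omega^{2}\sum_{j=2}^{N}\hat{\theta}_{j}.
\]
The key observation is that the backward coefficient recursion~(\ref{eq21})--(\ref{eq23}) is exactly the mapping $L$ of~\eqref{def:Lmapping}: simplifying $\tilde{\theta}_{i+1}\alpha^{2}-\tilde{\theta}_{i+1}^{2}\alpha^{2}\beta^{2}/(\phi+\tilde{\theta}_{i+1}\beta^{2})$ to $\phi\alpha^{2}\tilde{\theta}_{i+1}/(\phi+\beta^{2}\tilde{\theta}_{i+1})$ and collecting the $\tfrac{\lambda-1}{\lambda}$ terms in~\eqref{eq23} verifies that $(\tilde{\theta}_{i},\hat{\theta}_{i})=L(\tilde{\theta}_{i+1},\hat{\theta}_{i+1})$. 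With the terminal condition $(\tilde{\theta}_{N+1},\hat{\theta}_{N+1})=(0,0)$ this yields $(\tilde{\theta}_{i},\hat{\theta}_{i})=L^{N+1-i}(0,0)$. Writing $\hat{\theta}^{[k]}$ for the second component of $L^{k}(0,0)$, the index substitution $k=N+1-j$ turns the sum into $\sum_{j=2}^{N}\hat{\theta}_{j}=\sum_{k=1}^{N-1}\hat{\theta}^{[k]}$, while $\tilde{\theta}_{1},\hat{\theta}_{1}$ are the components of $L^{N}(0,0)$.

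Dividing by $N$ then produces three terms. For the first two, since $\lambda>\alpha^{2}$, Proposition~\ref{pro2} guarantees that the components of $L^{N}(0,0)$ converge to the finite fixed point $(\tilde{\theta},\hat{\theta})$; as $\mu_{1}$ and $\sigma_{1}^{2}$ are bounded, both $\tilde{\theta}_{1}\mu_{1}^{2}/N$ and $\hat{\theta}_{1}\sigma_{1}^{2}/N$ have bounded numerators and hence vanish as $N\to\infty$. In particular the limit carries no dependence on the initial belief $b_{1}$, which establishes the claimed independence.

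The remaining term is $-\tfrac{\omega^{2}}{N}\sum_{k=1}^{N-1}\hat{\theta}^{[k]}$, and this is where the main—indeed essentially the only—nontrivial step lies: passing the limit through the Ces\`aro average. Proposition~\ref{pro2} supplies $\hat{\theta}^{[k]}\to\hat{\theta}$, so the Ces\`aro mean theorem gives $\tfrac{1}{N-1}\sum_{k=1}^{N-1}\hat{\theta}^{[k]}\to\hat{\theta}$, and multiplying by $\tfrac{N-1}{N}\to1$ yields $\tfrac{1}{N}\sum_{k=1}^{N-1}\hat{\theta}^{[k]}\to\hat{\theta}$. Collecting the three limits gives
\[
\lim_{N\to\infty}\frac{V_{1}^{N}(b_{1})}{N}=0-0-\omega^{2}\hat{\theta}=-\hat{\theta}\omega^{2},
\]
which is the stated identity. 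No strengthening of Proposition~\ref{pro2} is required here: plain convergence of the sequence $\hat{\theta}^{[k]}$ is exactly the hypothesis the Ces\`aro theorem needs, so the argument closes without further estimates.
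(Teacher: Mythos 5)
Your proposal is correct and follows exactly the route the paper intends: its proof of Corollary~\ref{cor2} is a one-line citation of Corollary~\ref{cor1}, Proposition~\ref{pro2}, and Theorem~\ref{th2}, and you have simply filled in the details (the identification of the recursion (\ref{eq21})--(\ref{eq23}) with the mapping $L$, the index bookkeeping $(\tilde{\theta}_{i},\hat{\theta}_{i})=L^{N+1-i}(0,0)$, the vanishing of the belief-dependent terms under the boundedness hypothesis, and the Ces\`aro argument for the sum). No discrepancy with the paper's argument; your write-up is a complete and correct expansion of it.
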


\begin{proof}
This result follows from Corollary~\ref{cor1}, Proposition~\ref{pro2}, and Theorem~\ref{th2}.
\end{proof}

\subsection{Equilibria in Behavioral Strategies}
The previous results show that a pure strategy SPE does not exist if there are multiple choices of the coefficient $\pi_{i}$ for the adversary. We thus turn to the analysis of SPE in behavioral strategies.

\begin{theorem}
Let $N\geq2$ and $\varepsilon'<0<\varepsilon$. Then for $1\leq i\leq N$, $\tilde{\theta}_{i}$ and $\check{\theta}_{i}$ are as given by (\ref{eq22}), and
\begin{gather}
\tilde{\theta}_{N+1}=\check{\theta}_{N+1}=0,\label{eq37}\\
\check{\theta}_{i}=\theta_{i}+\check{\theta}_{i+1}\alpha_{i}^{2}-(\check{\theta}_{i+1}-\tilde{\theta}_{i+1})\alpha_{i}^{2}\frac{\lambda-1}{\lambda}.\label{eq38}
\end{gather}
Furthermore, there is a continuum of SPEs in behavioral strategies. Each SPE in the $i$-th stage consists of a behavioral strategy of the adversary and a pure strategy of the agent that satisfies
\begin{gather}
\mathbb{S}(g_{i}^{*}|b_{i})\triangleq\left\{(\pi_{i},\delta_{i}^{2}):\delta_{i}^{2}=\frac{\pi_{i}^{2}\sigma_{i}^{2}}{\lambda-1}\right\}\subseteq \mathcal{A}_{i}(b_{i},\varepsilon',\varepsilon,\lambda);\\
||\mathbb{S}(g_{i}^{*}|b_{i})||\geq2;\\
E_{g_{i}^{*}}(\Pi_{i})=0;\\
(\kappa_{i}^{*},\rho_{i}^{*})=f_{i}^{*}(b_{i})=\left(0,-\frac{\tilde{\theta}_{i+1}\alpha_{i}\beta_{i}\mu_{i}}{\phi_{i}+\tilde{\theta}_{i+1}\beta_{i}^{2}}\right).
\end{gather}
\label{th3}
\end{theorem}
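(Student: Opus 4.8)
The plan is to run a backward induction on the stage index using the two-sided dynamic programming characterization~\eqref{eq5}, with the ansatz that the value function induced by the claimed strategies is $V_i^N(b_i)=-\tilde{\theta}_i\mu_i^2-\check{\theta}_i\sigma_i^2-\sum_{j=i+1}^N\check{\theta}_j\omega_{j-1}^2$, where $\tilde{\theta}_i$ and $\check{\theta}_i$ are given by~\eqref{eq22} and~\eqref{eq38}. As a preliminary, I would establish by backward induction from $\tilde{\theta}_{N+1}=\check{\theta}_{N+1}=0$ that $\check{\theta}_i\geq\tilde{\theta}_i>0$ for all $i$: subtracting~\eqref{eq22} from~\eqref{eq38} gives a recursion for $\check{\theta}_i-\tilde{\theta}_i$ whose increments are manifestly nonnegative whenever $\check{\theta}_{i+1}\geq\tilde{\theta}_{i+1}>0$. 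This sign fact is what later pins the adversary's support to the binding constraint.

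The key simplification I would exploit lives on the curve $\delta_i^2=\pi_i^2\sigma_i^2/(\lambda-1)$. There the rescaled observation $\hat{S}_i/\Pi_i=S_i+C_i/\Pi_i$ has a law that does not depend on $\Pi_i$, since the variance of $C_i/\Pi_i$ equals $\sigma_i^2/(\lambda-1)$ for every $\pi_i$. Writing $Y\triangleq S_i+C_i/\Pi_i$, I would show $\hat{S}_i=\Pi_i Y$ with $\Pi_i$ independent of $Y$, that the posterior mean $m_i$ of $S_i$ reduces to $\frac{\lambda-1}{\lambda}Y+\frac{\mu_i}{\lambda}$ (a function of $Y$ alone), and that $\Lambda_{\nu}$ collapses to the deterministic value $\alpha_i^2\sigma_i^2/\lambda+\omega_i^2$. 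For the agent's (max) side of~\eqref{eq5} I would substitute the induction hypothesis and the affine action $a_i=\kappa_i\hat{S}_i+\rho_i$, so that, after discarding terms free of the action, the agent minimizes $(\phi_i+\tilde{\theta}_{i+1}\beta_i^2)E(a_i^2)+2\tilde{\theta}_{i+1}\alpha_i\beta_i E(m_i a_i)$. The decisive computation is that $E(\Pi_i)=0$ together with $\Pi_i\perp Y$ annihilates every cross-moment coupling $\hat{S}_i$ to the continuation, namely $E(\Pi_i Y)=E(\Pi_i Y^2)=0$; the $\kappa_i$-dependence then reduces to the single term $(\phi_i+\tilde{\theta}_{i+1}\beta_i^2)\kappa_i^2 E(\Pi_i^2)(\mu_i^2+\frac{\lambda}{\lambda-1}\sigma_i^2)$, with no linear part. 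Because $||\mathbb{S}(g_i^*|b_i)||\geq2$ and $E(\Pi_i)=0$ force $E(\Pi_i^2)>0$, this term is strictly convex in $\kappa_i$ and minimized at $\kappa_i^*=0$; optimizing the residual quadratic in $\rho_i$ then yields the stated $\rho_i^*$.

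For the adversary's (min) side I would reparametrize the feasible actions by the signal fraction $u\triangleq\pi_i^2\sigma_i^2/(\pi_i^2\sigma_i^2+\delta_i^2)$, under which the mutual-information constraint~\eqref{eq15} becomes $u\geq(\lambda-1)/\lambda$, while $E(m_i^2)=\mu_i^2+\sigma_i^2 u$ and $\Lambda_{\nu}=\alpha_i^2\sigma_i^2(1-u)+\omega_i^2$. With the agent playing $\kappa_i^*=0$ the stage reward is insensitive to the adversary, so the quantity the adversary maximizes over $u$ is $\alpha_i^2\sigma_i^2[\check{\theta}_{i+1}+(\tilde{\theta}_{i+1}-\check{\theta}_{i+1})u]$ plus constants; its slope in $u$ is $\leq0$ by the preliminary sign fact, so the optimum is the smallest admissible $u=(\lambda-1)/\lambda$, i.e.\ exactly the binding curve $\delta_i^2=\pi_i^2\sigma_i^2/(\lambda-1)$. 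Along this curve the adversary's value is independent of $\pi_i$, hence the adversary is indifferent among all distributions supported on it; this simultaneously fixes the support condition and yields the continuum of equilibria. The conditions $E(\Pi_i)=0$ and $||\mathbb{S}(g_i^*|b_i)||\geq2$ are then the equilibrium-selection requirements that keep $\kappa_i^*=0$ optimal for the agent, and $\varepsilon'<0<\varepsilon$ guarantees that feasible zero-mean two-point laws exist (e.g.\ mass $\varepsilon/(\varepsilon-\varepsilon')$ on $\varepsilon'$ and $-\varepsilon'/(\varepsilon-\varepsilon')$ on $\varepsilon$).

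Finally I would close the induction by inserting $(\kappa_i^*,\rho_i^*)$ and the binding-curve adversary strategy into~\eqref{eq5} and verifying that $V_i^N$ recovers the separable ansatz: the $\mu_i^2$-coefficient collapses to $\tilde{\theta}_i$ through the standard LQG identity $\theta_i+\phi_i\gamma_i^2+\tilde{\theta}_{i+1}(\alpha_i-\beta_i\gamma_i)^2=\tilde{\theta}_i$ with $\gamma_i=\tilde{\theta}_{i+1}\alpha_i\beta_i/(\phi_i+\tilde{\theta}_{i+1}\beta_i^2)$, and the $\sigma_i^2$-coefficient collapses to $\check{\theta}_i$ precisely by~\eqref{eq38}. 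The base case $i=N$ is degenerate: $\tilde{\theta}_{N+1}=0$ makes $\rho_N^*=0$, and minimizing $\phi_N E(a_N^2)$ shows $\kappa_N=\rho_N=0$ is optimal regardless of the adversary. I expect the agent's best-response step to be the crux: the entire ``observation is worthless'' conclusion rests on the $\hat{S}_i=\Pi_i Y$ decomposition with $\Pi_i\perp Y$ and the vanishing of the cross-moments under $E(\Pi_i)=0$, so getting those moment identities exactly right---rather than the lengthy but routine LQG verification of the coefficient recursions---is where the real work lies.
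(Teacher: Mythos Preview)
Your proposal is correct and follows the same backward-induction, mutual-best-response verification as the paper. Where the paper proceeds by direct Q-function expansion---computing $Q_{N-1}^N(b_{N-1},g_{N-1}^*,\kappa_{N-1},\rho_{N-1})$ explicitly (its equation~(\ref{a20})) and reading off that the cross term in $\kappa_{N-1}$ vanishes when $E_{g^*}(\Pi)=0$---you instead factor $\hat{S}_i=\Pi_i Y$ on the binding curve with $Y$ having $\Pi_i$-free law, so the vanishing of $E(\Pi_i Y)$ and $E(\Pi_i Y^2)$ is immediate from independence and the zero-mean hypothesis. Similarly, for the adversary's side the paper works with the raw ratio $\pi^2\sigma^2/(\pi^2\sigma^2+\delta^2)$ inside its Q-function (equation~(\ref{a18})), while your signal-fraction variable $u$ and the identities $E(m_i^2)=\mu_i^2+\sigma_i^2 u$, $\Lambda_\nu=\alpha_i^2\sigma_i^2(1-u)+\omega_i^2$ make the monotonicity in $u$ and the indifference along the binding curve transparent. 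The paper additionally sketches (equations~(\ref{a16})--(\ref{a17})) that even a general behavioral agent strategy reduces to a pure affine one; you omit this, which is fine since the model already restricts the agent to affine $f_i$. Overall your route yields the same equilibria and value recursion, but with a cleaner explanation of \emph{why} the observation becomes worthless under $E(\Pi_i)=0$; the paper's route is more mechanical but requires no structural observation beyond expanding the Gaussian moments.
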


Interestingly, this SPE is a babbling equilibrium in which the agent's action is based on its belief, not on the manipulated state. Nonetheless, the value of the game depends on the adversarial manipulation, as shown in the following corollary.

\begin{corollary}
Let $\varepsilon'<0<\varepsilon$. For any SPE in behavioral strategies, we have
\begin{equation}
V_{i}^{N}(b_{i})=-\tilde{\theta}_{i}\mu_{i}^{2}-\check{\theta}_{i}\sigma_{i}^{2}-\sum_{j=i+1}^{N}\check{\theta}_{j}\omega_{j-1}^{2}.
\label{eq41}
\end{equation}
\label{cor3}
\end{corollary}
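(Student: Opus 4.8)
The plan is to verify the claimed value function \eqref{eq41} by backward induction on the stage index $i$, exactly paralleling the structure of Corollary~\ref{cor1}, but using the babbling equilibrium strategies of Theorem~\ref{th3} in place of the informative pure-strategy equilibrium. The base case $i=N+1$ is trivial via the terminal conditions \eqref{eq37}, since the empty sum vanishes and $\tilde{\theta}_{N+1}=\check{\theta}_{N+1}=0$ force $V_{N+1}^{N}\equiv 0$. For the inductive step, I would assume \eqref{eq41} holds at stage $i+1$ and substitute the equilibrium strategies from Theorem~\ref{th3} into the backward dynamic programming recursion \eqref{eq5}, then carry out the expectation over $S_{i}$, $\hat{S}_{i}$, and the randomized adversarial choice of $(\Pi_{i},\Delta_{i}^{2})$.

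The key computational steps are as follows. First, with the agent playing $\kappa_{i}^{*}=0$ and $\rho_{i}^{*}=-\tilde{\theta}_{i+1}\alpha_{i}\beta_{i}\mu_{i}/(\phi_{i}+\tilde{\theta}_{i+1}\beta_{i}^{2})$, the action $a_{i}=\rho_{i}^{*}$ is deterministic given the belief and does not depend on the manipulated state $\hat{s}_{i}$. This collapses the instantaneous reward $E(R_{i}(S_{i},A_{i}))=-\theta_{i}E(S_{i}^{2})-\phi_{i}(\rho_{i}^{*})^{2}$, where $E(S_{i}^{2})=\mu_{i}^{2}+\sigma_{i}^{2}$ under belief $b_{i}$. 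Second, I would propagate the belief: since $\kappa_{i}^{*}=0$, the mean update \eqref{eq12} becomes $\mu_{i+1}=\alpha_{i}(\pi_{i}\sigma_{i}^{2}\hat{s}_{i}+\mu_{i}\delta_{i}^{2})/(\pi_{i}^{2}\sigma_{i}^{2}+\delta_{i}^{2})+\beta_{i}\rho_{i}^{*}$, and the variance update \eqref{eq13} together with the support constraint $\delta_{i}^{2}=\pi_{i}^{2}\sigma_{i}^{2}/(\lambda-1)$ gives $\sigma_{i+1}^{2}=\alpha_{i}^{2}\sigma_{i}^{2}(\lambda-1)/\lambda+\omega_{i}^{2}$, which is crucially \emph{independent} of $\pi_{i}$. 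Plugging these into the inductive hypothesis and taking expectations, the $-\check{\theta}_{i+1}\sigma_{i+1}^{2}$ term and the accumulated noise sum combine to reproduce the $-\sum_{j=i+1}^{N}\check{\theta}_{j}\omega_{j-1}^{2}$ tail plus a contribution to $\check{\theta}_{i}\sigma_{i}^{2}$, while the $-\tilde{\theta}_{i+1}E(\mu_{i+1}^{2})$ term must be expanded and its variance-of-the-posterior-mean piece tracked carefully.

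The main obstacle I anticipate is the expectation $E(\mu_{i+1}^{2})$ over the randomized manipulation and the stochastic $\hat{S}_{i}$: the posterior mean is a random variable whose variance (over the joint law of $S_{i}$ and the observation channel) must be computed and shown to contribute the precise coefficient that, when combined with the reward term and the variance-recursion term, yields exactly the recursion \eqref{eq38} for $\check{\theta}_{i}$ and the standard recursion \eqref{eq22} for $\tilde{\theta}_{i}$. Since $E_{g_{i}^{*}}(\Pi_{i})=0$ but $E_{g_{i}^{*}}(\Pi_{i}^{2})>0$, the cross terms linear in $\pi_{i}$ average out while the quadratic terms survive; verifying that the surviving quadratic contribution is independent of the particular behavioral strategy on $\mathbb{S}(g_{i}^{*}|b_{i})$ — depending only on $\lambda$ and $\sigma_{i}^{2}$ through the ratio $\delta_{i}^{2}/(\pi_{i}^{2}\sigma_{i}^{2}+\delta_{i}^{2})=1/\lambda$ — is the delicate point that explains why the value is common to the entire continuum of equilibria. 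Once the mean-term and variance-term coefficients are collected and matched against \eqref{eq22} and \eqref{eq38}, the induction closes and \eqref{eq41} follows for all $1\leq i\leq N$.
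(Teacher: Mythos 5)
Your overall strategy is exactly the paper's: the paper proves Theorem~\ref{th3} and Corollary~\ref{cor3} jointly by backward dynamic programming, substituting the babbling-equilibrium strategies into the stage Q-function and reading off the value stage by stage, which is precisely your induction. However, as written, your central computational step contains an error that would prevent the induction from closing. On the equilibrium support $\delta_{i}^{2}=\pi_{i}^{2}\sigma_{i}^{2}/(\lambda-1)$, the variance update (\ref{eq13}) gives
\begin{equation*}
\sigma_{i+1}^{2}=\frac{\alpha_{i}^{2}\sigma_{i}^{2}\delta_{i}^{2}}{\pi_{i}^{2}\sigma_{i}^{2}+\delta_{i}^{2}}+\omega_{i}^{2}=\frac{\alpha_{i}^{2}\sigma_{i}^{2}}{\lambda}+\omega_{i}^{2},
\end{equation*}
not $\alpha_{i}^{2}\sigma_{i}^{2}(\lambda-1)/\lambda+\omega_{i}^{2}$ as you state. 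The posterior variance retains the fraction $\delta_{i}^{2}/(\pi_{i}^{2}\sigma_{i}^{2}+\delta_{i}^{2})=1/\lambda$ of $\alpha_{i}^{2}\sigma_{i}^{2}$; it is the variance of the posterior \emph{mean} that equals $\sigma_{i}^{2}(\lambda-1)/\lambda$ (by the law of total variance the two fractions sum to one). Indeed, you quote the correct ratio $1/\lambda$ in your final paragraph, so the proposal is internally inconsistent. The distinction is not cosmetic, because the two fractions multiply different coefficients: the term $-\tilde{\theta}_{i+1}E(\mu_{i+1}^{2})$ contributes $-\tilde{\theta}_{i+1}\alpha_{i}^{2}\frac{\lambda-1}{\lambda}\sigma_{i}^{2}$ (plus the mean terms that combine with the reward into $-\tilde{\theta}_{i}\mu_{i}^{2}$ via (\ref{eq22})), while the term $-\check{\theta}_{i+1}\sigma_{i+1}^{2}$ contributes $-\check{\theta}_{i+1}\frac{\alpha_{i}^{2}}{\lambda}\sigma_{i}^{2}-\check{\theta}_{i+1}\omega_{i}^{2}$. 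Collecting these with $-\theta_{i}\sigma_{i}^{2}$ yields $-\check{\theta}_{i}\sigma_{i}^{2}$ precisely because (\ref{eq38}) expands to $\check{\theta}_{i}=\theta_{i}+\tilde{\theta}_{i+1}\alpha_{i}^{2}\frac{\lambda-1}{\lambda}+\check{\theta}_{i+1}\frac{\alpha_{i}^{2}}{\lambda}$. With your stated update, the $\check{\theta}_{i+1}$ coefficient would be $\alpha_{i}^{2}(\lambda-1)/\lambda$ instead of $\alpha_{i}^{2}/\lambda$, and the induction would not reproduce (\ref{eq38}).

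Once that constant is corrected, the rest of your outline goes through and matches the paper's argument. One simplification you could also exploit: since $\kappa_{i}^{*}=0$, the agent's action does not depend on $\hat{s}_{i}$ at all, and the posterior mean satisfies $E(m_{i})=\mu_{i}$ for \emph{every} fixed $\pi_{i}\not=0$ on the support; hence no cancellation via $E_{g_{i}^{*}}(\Pi_{i})=0$ is needed in the value computation (that condition is only needed to make $\kappa_{i}^{*}=0$ a best response, i.e., for Theorem~\ref{th3} itself). This is what makes the value identical across the entire continuum of equilibria, which is the content of the corollary.
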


The proofs of Theorem~\ref{th3} and Corollary~\ref{cor3} are given in the appendix.


\begin{property}
It follows from the update rules (\ref{eq21})-(\ref{eq23}) and (\ref{eq37})-(\ref{eq38}) that for all $1\leq i\leq N$,
\begin{align}
\check{\theta}_{i}\geq\hat{\theta}_{i}\geq\tilde{\theta}_{i}>0.
\end{align}
\label{prp4}
\end{property}

\begin{remark}
We can make the following observations based on Theorem~\ref{th3} and Corollary~\ref{cor3}.
\begin{itemize}
\item It is sufficient for the agent to use a pure affine strategy against a behavioral strategy of the adversary.
\item Although the adversary cannot use $\pi_{i}=0$, the behavioral strategy $g_{i}^{*}$ needs to achieve zero-mean of the random coefficient $\Pi_{i}$.
\item A rational adversary will always use a manipulation with the largest variance.
\item From Property~\ref{prp4}, the value (\ref{eq41}) of an SPE in behavioral strategies is always less than or equal to the value (\ref{eq28}) of a pure strategy SPE.
\end{itemize}
\end{remark}

{\bf Time-invariant system:} We again turn to the time-invariant system for $N\to\infty$. Let us define the mapping $J:\mathbb{R}_{\geq0}^{2}\to\mathbb{R}_{\geq0}^{2}$ as
\begin{align}
J(x,y)=\left(\theta+\frac{\phi\alpha^{2}x}{\phi+\beta^{2}x},\theta+\alpha^{2}x\frac{\lambda-1}{\lambda}+\alpha^{2}y\frac{1}{\lambda}\right).
\end{align}
Observe that $J$ is effectively the coefficient update rule (\ref{eq22}), (\ref{eq37}), (\ref{eq38}) for the time-invariant system. In what follows, we characterize $J$ and the stationary SPEs in behavioral strategies for the ALQG game.

\begin{proposition}
Let $\lambda>\alpha^2$. Then the mapping $J$ admits a least fixed point $(\tilde{\theta},\check{\theta})\in\mathbb{R}_{\geq0}^{2}$, for which
\begin{align}
\lim_{n\to\infty}J^{n}(0,0)=J(\tilde{\theta},\check{\theta})=(\tilde{\theta},\check{\theta}).
\end{align}
\label{pro3}
\end{proposition}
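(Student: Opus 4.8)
The plan is to follow the structure of the proof of Proposition~\ref{pro2}, since $J$ shares with $L$ the same monotone, nearly-decoupled form. First I would verify that $J$ is order-preserving on $\mathbb{R}_{\geq0}^{2}$. Its first coordinate $\theta+\frac{\phi\alpha^{2}x}{\phi+\beta^{2}x}$ depends on $x$ alone and is increasing there, since its derivative equals $\frac{\phi^{2}\alpha^{2}}{(\phi+\beta^{2}x)^{2}}>0$; its second coordinate $\theta+\alpha^{2}x\frac{\lambda-1}{\lambda}+\alpha^{2}y\frac{1}{\lambda}$ has nonnegative coefficients on both $x$ and $y$ (recall $\lambda>1$), hence is nondecreasing in each argument. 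Thus $(x,y)\preccurlyeq(x',y')$ implies $J(x,y)\preccurlyeq J(x',y')$, exactly as for $L$.

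Next I would solve the fixed point equation $J(\tilde{\theta},\check{\theta})=(\tilde{\theta},\check{\theta})$ coordinatewise and show it has a unique solution on $\mathbb{R}_{\geq0}^{2}$ if and only if $\lambda>\alpha^{2}$. The first equation $\tilde{\theta}=\theta+\frac{\phi\alpha^{2}\tilde{\theta}}{\phi+\beta^{2}\tilde{\theta}}$ is the standard Riccati-type relation already appearing in~(\ref{eq22}) and admits a unique positive root $\tilde{\theta}$ (note it coincides with the $\tilde{\theta}$ of Proposition~\ref{pro2}). Substituting this $\tilde{\theta}$ into the second equation and solving the resulting affine equation in $\check{\theta}$ gives
\[
\check{\theta}=\frac{\lambda\theta+\alpha^{2}\tilde{\theta}(\lambda-1)}{\lambda-\alpha^{2}},
\]
which is finite and nonnegative precisely when $\lambda-\alpha^{2}>0$. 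This is where the hypothesis $\lambda>\alpha^{2}$ enters: for $\lambda\leq\alpha^{2}$ the coefficient $\frac{\alpha^{2}}{\lambda}\geq1$ in the $y$-update and no nonnegative fixed point exists.

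Finally, since $\theta>0$ we have $J(0,0)=(\theta,\theta)\succcurlyeq(0,0)$, so by order-preservation the iterates $J^{n}(0,0)$ form a nondecreasing chain in $(\mathbb{R}_{\geq0}^{2},\preccurlyeq)$. The chain is bounded above: the first coordinate never exceeds $\theta+\frac{\phi\alpha^{2}}{\beta^{2}}$, and, because $\frac{\alpha^{2}}{\lambda}<1$ under $\lambda>\alpha^{2}$, the second-coordinate recursion is an affine contraction in $y$ and so cannot diverge. As $J$ is continuous on $\mathbb{R}_{\geq0}^{2}$, Kleene's fixed point theorem~\cite{baranga1991} then yields $\lim_{n\to\infty}J^{n}(0,0)=(\tilde{\theta},\check{\theta})$ as the least fixed point, which is the claim.

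I expect the only substantive step to be the boundedness and convergence of the iterated sequence, which hinges entirely on $\lambda>\alpha^{2}$ turning the $y$-update into a contraction; the monotonicity check and the closed form for $\check{\theta}$ are routine.
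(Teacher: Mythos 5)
Your proof is correct and takes essentially the same route as the paper, which establishes Proposition~\ref{pro3} by repeating the argument of Proposition~\ref{pro2}: order-preservation of the map, existence and uniqueness of a nonnegative fixed point precisely when $\lambda>\alpha^{2}$, and convergence of the iterates from $(0,0)$ via Kleene's fixed point theorem~\cite{baranga1991}. The additional detail you supply --- the closed form $\check{\theta}=\frac{\lambda\theta+\alpha^{2}\tilde{\theta}(\lambda-1)}{\lambda-\alpha^{2}}$ and the explicit boundedness of the monotone chain $J^{n}(0,0)$ --- fleshes out steps the paper leaves implicit, but it is the same method.
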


The proof of Proposition~\ref{pro3} follows using the arguments in the proof of Proposition~\ref{pro2}.

\begin{theorem}
Let $\lambda>\alpha^2$, $\varepsilon'<0<\varepsilon$, and $N\to\infty$. Then the ALQG game of the time-invariant model has a stationary SPE in behavioral strategies as: For $i\geq1$,
\begin{gather}
g_{i}^{*}\left(\pi_{i}=\varepsilon',\left.\delta_{i}^{2}=\frac{\varepsilon'^{2}\sigma_{i}^{2}}{\lambda-1}\right|b_{i}\right)=\frac{\varepsilon}{\varepsilon-\varepsilon'};\\
g_{i}^{*}\left(\pi_{i}=\varepsilon,\left.\delta_{i}^{2}=\frac{\varepsilon^{2}\sigma_{i}^{2}}{\lambda-1}\right|b_{i}\right)=-\frac{\varepsilon'}{\varepsilon-\varepsilon'};\\
(\kappa_{i}^{*},\rho_{i}^{*})=f_{i}^{*}(b_{i})=\left(0,-\frac{\tilde{\theta}\alpha\beta\mu_{i}}{\phi+\tilde{\theta}\beta^{2}}\right).
\end{gather}
\label{th4}
\end{theorem}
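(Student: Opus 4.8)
The plan is to mirror the proof of Theorem~\ref{th2}: I would instantiate the general characterization of behavioral-strategy SPEs from Theorem~\ref{th3} with one concrete adversarial mixing distribution, and then pass to the stationary regime using the fixed-point convergence of Proposition~\ref{pro3}. Since $\varepsilon'<0<\varepsilon$, Theorem~\ref{th3} already guarantees a continuum of SPEs, each consisting of a zero-mean mixing distribution $g_i^*$ supported on $\{(\pi_i,\delta_i^2):\delta_i^2=\pi_i^2\sigma_i^2/(\lambda-1)\}\cap\mathcal{A}_i(b_i,\varepsilon',\varepsilon,\lambda)$ with at least two atoms, paired with the agent's babbling affine response $(\kappa_i^*,\rho_i^*)=(0,-\tilde{\theta}_{i+1}\alpha_i\beta_i\mu_i/(\phi_i+\tilde{\theta}_{i+1}\beta_i^2))$. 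My task is therefore only to exhibit one admissible mixing distribution, verify the three structural conditions, and then let $N\to\infty$.

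First I would select the canonical two-atom distribution placing mass on the two extreme feasible slopes $\pi_i=\varepsilon'$ and $\pi_i=\varepsilon$, each paired with the largest admissible variance $\delta_i^2=\pi_i^2\sigma_i^2/(\lambda-1)$ so that the mutual-information constraint~\eqref{eq15} holds with equality. Both atoms lie in the required support set by construction, and since $\varepsilon'\neq\varepsilon$ the cardinality is exactly two, satisfying $||\mathbb{S}(g_i^*|b_i)||\geq2$. To fix the mixing probabilities I would impose $E_{g_i^*}(\Pi_i)=0$: writing the mass on $\varepsilon'$ as $p$, solving $p\varepsilon'+(1-p)\varepsilon=0$ yields $p=\varepsilon/(\varepsilon-\varepsilon')$ and $1-p=-\varepsilon'/(\varepsilon-\varepsilon')$, exactly the probabilities in the statement. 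Crucially, the hypothesis $\varepsilon'<0<\varepsilon$ makes $\varepsilon-\varepsilon'>0$ while keeping both $\varepsilon>0$ and $-\varepsilon'>0$, so the two probabilities are strictly positive and sum to one; this is the single place where the sign assumption on $\varepsilon',\varepsilon$ is genuinely used. By Theorem~\ref{th3}, this $(g_i^*,f_i^*)$ pair is then an SPE for every finite $N$.

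It remains to take the infinite-horizon limit and argue stationarity. For the time-invariant model the coefficient recursions~\eqref{eq22} and~\eqref{eq37}--\eqref{eq38} coincide with iteration of the mapping $J$, so Proposition~\ref{pro3} (invoking $\lambda>\alpha^2$) gives $\tilde{\theta}_i\to\tilde{\theta}$ and $\check{\theta}_i\to\check{\theta}$ as $N\to\infty$. Substituting $\tilde{\theta}_{i+1}\to\tilde{\theta}$ together with $\alpha_i=\alpha$, $\beta_i=\beta$, $\phi_i=\phi$ collapses the agent's response to the belief-dependent stationary form $(\kappa_i^*,\rho_i^*)=(0,-\tilde{\theta}\alpha\beta\mu_i/(\phi+\tilde{\theta}\beta^2))$, identical across stages. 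The adversary's distribution is already stage-invariant, since its mixing probabilities are constants and its atoms depend on the stage only through the current belief variance $\sigma_i^2$, making it a genuine stationary belief-feedback strategy. Together these give the claimed stationary behavioral-strategy SPE. I do not anticipate a serious obstacle, as the analytical work is discharged by Theorem~\ref{th3} and Proposition~\ref{pro3}; the only point demanding care is confirming that the zero-mean requirement is met by an admissible (positive, normalized) two-atom distribution, which is precisely where $\varepsilon'<0<\varepsilon$ enters.
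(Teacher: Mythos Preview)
Your proposal is correct and follows essentially the same approach as the paper, which states only that the proof of Theorem~\ref{th4} ``is based on Theorem~\ref{th3}, Corollary~\ref{cor3}, and Proposition~\ref{pro3}, and follow[s] from similar arguments as used in the proofs of Theorem~\ref{th2} and Corollary~\ref{cor2}.'' You have simply spelled out the details of instantiating Theorem~\ref{th3} with the two-atom zero-mean distribution and then invoking Proposition~\ref{pro3} for the stationary limit, exactly as intended.
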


\begin{corollary}
Let $b_{1}\triangleq\mathcal{N}(\mu_{1},\sigma_{1}^{2})$ with {\it bounded} mean and variance. For the stationary SPE in behavioral strategies in Theorem~\ref{th4}, the expected average reward per stage in steady state is independent of the initial belief and is given by
\begin{equation}
\lim_{N\to\infty}\frac{V_{1}^{N}(b_{1})}{N}=-\check{\theta}\omega^{2}.
\end{equation}
\label{cor4}
\end{corollary}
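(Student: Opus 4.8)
The plan is to reduce the claim to a Cesàro-averaging argument applied to the convergent sequence of coefficients $\{\check{\theta}_{i}\}$, paralleling the proof of Corollary~\ref{cor2}. First I would invoke Corollary~\ref{cor3} to obtain the value function in closed form; specializing~\eqref{eq41} to $i=1$ and to the time-invariant system (so that $\omega_{j-1}^{2}\equiv\omega^{2}$) gives
\[
V_{1}^{N}(b_{1})=-\tilde{\theta}_{1}\mu_{1}^{2}-\check{\theta}_{1}\sigma_{1}^{2}-\omega^{2}\sum_{j=2}^{N}\check{\theta}_{j}.
\]
Dividing by $N$ splits the quantity into three pieces, and the strategy is to show that the first two vanish while the third converges to $\check{\theta}\omega^{2}$.

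The two boundary terms are easy. Since $\mu_{1}$ and $\sigma_{1}^{2}$ are bounded by assumption, and since $\tilde{\theta}_{1},\check{\theta}_{1}$ are the first-stage coefficients of a recursion that, by Proposition~\ref{pro3}, converges to $(\tilde{\theta},\check{\theta})$ and therefore stays bounded uniformly in $N$, the terms $-\tilde{\theta}_{1}\mu_{1}^{2}/N$ and $-\check{\theta}_{1}\sigma_{1}^{2}/N$ each tend to $0$ as $N\to\infty$.

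The heart of the argument is the remaining sum. The key observation is that the backward recursion~\eqref{eq37}-\eqref{eq38} is exactly one application of the mapping $J$: a direct comparison shows $J(\tilde{\theta}_{i+1},\check{\theta}_{i+1})=(\tilde{\theta}_{i},\check{\theta}_{i})$. Together with the terminal condition $(\tilde{\theta}_{N+1},\check{\theta}_{N+1})=(0,0)$, this yields the identity $(\tilde{\theta}_{j},\check{\theta}_{j})=J^{N+1-j}(0,0)$, so that $\check{\theta}_{j}$ is the second coordinate of $J^{N+1-j}(0,0)$. Re-indexing with $n=N+1-j$ turns the sum into $\sum_{j=2}^{N}\check{\theta}_{j}=\sum_{n=1}^{N-1}\big[J^{n}(0,0)\big]_{2}$, where $[\,\cdot\,]_{2}$ denotes the second coordinate. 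Crucially, the summands now carry no hidden dependence on the horizon $N$. By Proposition~\ref{pro3}, $[J^{n}(0,0)]_{2}\to\check{\theta}$, so by the Cesàro mean theorem $\frac{1}{N-1}\sum_{n=1}^{N-1}[J^{n}(0,0)]_{2}\to\check{\theta}$; multiplying by $(N-1)/N\to1$ gives $\frac{1}{N}\sum_{j=2}^{N}\check{\theta}_{j}\to\check{\theta}$. Combining the three pieces yields $\lim_{N\to\infty}V_{1}^{N}(b_{1})/N=-\check{\theta}\omega^{2}$.

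I expect the only delicate point to be the re-indexing step. The finite-horizon coefficient $\check{\theta}_{j}$ itself depends on the horizon $N$, so $\sum_{j=2}^{N}\check{\theta}_{j}$ is not a priori a partial sum of a single fixed sequence but rather the trace of a doubly-indexed array; recognizing that the backward recursion collapses this $N$-dependence into the single iterate $J^{N+1-j}(0,0)$ is precisely what makes the Cesàro theorem applicable. Everything else---the vanishing of the boundary terms and the Cesàro convergence---is routine given Proposition~\ref{pro3} and Corollary~\ref{cor3}.
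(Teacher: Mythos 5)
Your proposal is correct and follows exactly the route the paper intends: the paper's (very terse) proof simply cites Corollary~\ref{cor3} for the closed-form value function, Proposition~\ref{pro3} for convergence of the $J$-iterates, and ``similar arguments'' to Corollary~\ref{cor2}, which is precisely your decomposition into vanishing boundary terms plus a Ces\`{a}ro average of $[J^{n}(0,0)]_{2}\to\check{\theta}$. Your explicit treatment of the re-indexing (collapsing the horizon-dependent coefficients $\check{\theta}_{j}$ into iterates $J^{N+1-j}(0,0)$ of a single fixed map) is a faithful, more careful write-up of the step the paper leaves implicit.
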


The proofs of Theorem~\ref{th4} and Corollary~\ref{cor4} are based on Theorem~\ref{th3}, Corollary~\ref{cor3}, and Proposition~\ref{pro3}, and follow from similar arguments as used in the proofs of Theorem~\ref{th2} and Corollary~\ref{cor2}. Observe that Corollary~\ref{cor2}, Corollary~\ref{cor4}, and Property~\ref{prp4} jointly imply that the expected average agent reward per stage in steady state is higher when considering pure strategies, as behavioral strategies allow for more uncertainty about the attack and thus make the adversary stronger.

The behavioral strategy of the adversary in Theorem~\ref{th3} needs to achieve zero-mean of the random coefficient $\Pi_{i}$, which {\it cannot} be satisfied if $0\leq\varepsilon'<\varepsilon$ or $\varepsilon'<\varepsilon\leq0$. In the following, we study SPE under these conditions.

\begin{theorem}
Let $N\geq2$. If $0=\varepsilon'<\varepsilon$ or if $\varepsilon'<\varepsilon=0$, there is no SPE for the ALQG game.
\label{th5}
\end{theorem}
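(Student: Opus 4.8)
The plan is to rule out equilibria in pure and in behavioral strategies separately. In both regimes considered, one endpoint of the feasible interval for $\pi_i$ is $0$ and the other is nonzero, so $\varepsilon'\neq\varepsilon$ and Theorem~\ref{th1} already excludes any pure strategy SPE. It remains to show that no SPE in behavioral strategies exists, and since subgame perfection demands that the stage game be in equilibrium at every stage given its continuation, it is enough to exhibit one stage whose stage game has no Nash equilibrium. I would argue at stage $i=N-1$ (which exists because $N\geq2$), whose continuation is the value of the one-stage subgame at $N$: there $a_N=0$ is dominant for the agent, so $V_N^N(b_N)=-\theta_N\mu_N^2-\theta_N\sigma_N^2$, a separable quadratic with strictly positive coefficient $\tilde\theta_N=\theta_N$ on $\mu_N^2$.

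Next I would collapse the stage-$(N-1)$ game to a two-moment problem. As established in the proofs of Theorems~\ref{th1} and~\ref{th3}, it is optimal for the adversary to bind the mutual information constraint, i.e. $\delta_{N-1}^2=\pi_{N-1}^2\sigma_{N-1}^2/(\lambda-1)$, because $\sigma_N^2$ is increasing in $\delta_{N-1}^2$ and enters the continuation value with a negative coefficient. Under this binding, computing the relevant conditional moments from $\hat S_{N-1}=\Pi_{N-1}S_{N-1}+C_{N-1}$ and the posterior mean $\hat m=\frac{(\lambda-1)\hat S_{N-1}}{\lambda\Pi_{N-1}}+\frac{\mu_{N-1}}{\lambda}$ shows that the stage payoff depends on the adversarial behavioral strategy only through $p_1\triangleq E_{g_{N-1}}(\Pi_{N-1})$ and $p_2\triangleq E_{g_{N-1}}(\Pi_{N-1}^2)$. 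The agent's best response is the affine LQG controller, whose gain $\kappa_{N-1}^*$ vanishes if and only if $p_1=0$. For a fixed agent gain, the adversary's payoff is \emph{linear} in $(p_1,p_2)$; evaluated at a point mass $\Pi_{N-1}=t$ it becomes a parabola $\psi(t)$ whose leading coefficient is strictly positive whenever $\kappa_{N-1}\neq0$. Because the extreme points of the moment body are exactly the point masses, any adversarial best response is then supported at the maximizers of the strictly convex $\psi$ over $[\varepsilon',\varepsilon]$, i.e. at the endpoints of this interval.

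The contradiction now follows. Suppose a behavioral SPE existed at stage $N-1$, with adversarial moments $(p_1,p_2)$ and agent gain $\kappa_{N-1}^*$. If $\kappa_{N-1}^*=0$, then $p_1=0$ is required; but the feasible coefficients form a sign-definite set excluding $0$ -- namely $\Pi_{N-1}\in(0,\varepsilon]$ when $0=\varepsilon'<\varepsilon$ and $\Pi_{N-1}\in[\varepsilon',0)$ when $\varepsilon'<\varepsilon=0$ -- so $p_1\neq0$, and this case is impossible. If $\kappa_{N-1}^*\neq0$, every adversarial best response is supported on $\{\varepsilon',\varepsilon\}$; a genuinely randomized best response would place mass on both endpoints, but $0\in\{\varepsilon',\varepsilon\}$ is infeasible by~(\ref{eq15}) (the adversary cannot use $\pi=0$). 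Hence any feasible best response is a single point mass, i.e. a pure strategy, so that $(g_{N-1}^*,f_{N-1}^*)$ would be a pure strategy SPE, contradicting Theorem~\ref{th1}; and should the unconstrained maximizer of $\psi$ be the infeasible endpoint $0$, the adversary has no best response at all. Either way the stage-$(N-1)$ game has no Nash equilibrium, so the ALQG game has no SPE.

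I expect the main obstacle to be the reduction in the second step: verifying that, after binding the constraint, the stage payoff depends on the adversary only through $(p_1,p_2)$ and is linear in them. This requires evaluating mixed expectations such as $E(\hat S_{N-1}^2)$ and $E(\hat m\,\hat S_{N-1})$ over the joint randomness of $S_{N-1}$, $C_{N-1}$, and $\Pi_{N-1}$, and checking that the higher moments of $\Pi_{N-1}$ cancel -- in particular that $E(\hat m^2)$ is independent of $g_{N-1}$. Once linearity in $(p_1,p_2)$ is in hand, the extreme-point argument together with the sign-definiteness of the feasible interval for $\pi_{N-1}$ yields non-existence immediately.
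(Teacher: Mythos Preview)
Your approach is essentially the same as the paper's. Both arguments work at stage $N-1$ (the paper simply takes $N=2$), bind the mutual information constraint, reduce the stage game so that the adversary enters only through $E(\Pi_{N-1})$ and $E(\Pi_{N-1}^2)$, observe that for fixed $(\kappa_{N-1},\rho_{N-1})$ with $\kappa_{N-1}\neq0$ the resulting function of $\pi_{N-1}$ is strictly concave (so the adversary's minimizers lie at the endpoints $\{\varepsilon',\varepsilon\}$), and then conclude that since one endpoint is the infeasible value $0$ the support collapses to a singleton, forcing a pure strategy and contradicting Theorem~\ref{th1}. The only cosmetic difference is that you phrase the endpoint reduction via extreme points of the moment body and do an explicit case split on $\kappa_{N-1}^*=0$ versus $\kappa_{N-1}^*\neq0$, whereas the paper first notes that sign-definiteness of the feasible $\pi$-interval forces $E_{g_1}(\Pi_1)\neq0$ and hence $\kappa_1^*\neq0$ directly; these are the same observation in a different order.
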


\begin{theorem}
Let $N=2$. If $0<\varepsilon'<\varepsilon$ or if $\varepsilon'<\varepsilon<0$, there is a unique SPE in behavioral strategies for the ALQG game: For any belief $b_{1}\triangleq\mathcal{N}(\mu_{1},\sigma_{1}^{2})$,
\begin{gather}
g_{1}^{*}\left(\pi_{1}=\varepsilon',\left.\delta_{1}^{2}=\frac{\varepsilon'^{2}\sigma_{1}^{2}}{\lambda-1}\right|b_{1}\right)=\frac{\varepsilon}{\varepsilon'+\varepsilon};\\
g_{1}^{*}\left(\pi_{1}=\varepsilon,\left.\delta_{1}^{2}=\frac{\varepsilon^{2}\sigma_{1}^{2}}{\lambda-1}\right|b_{1}\right)=\frac{\varepsilon'}{\varepsilon'+\varepsilon};\\
\kappa_{1}^{*}=f_{1}^{*}(b_{1})\qquad\qquad\qquad\qquad\qquad\qquad\qquad\qquad\qquad\qquad\qquad\qquad\\
\;\;\;=\frac{-\theta_{2}\alpha_{1}\beta_{1}E_{g_{1}^{*}}(\Pi_{1})\sigma_{1}^{2}}{(\phi_{1}+\theta_{2}\beta_{1}^{2})\left(E_{g_{1}^{*}}(\Pi_{1}^{2})\left(\mu_{1}^{2}+\frac{\lambda}{\lambda-1}\sigma_{1}^{2}\right)-E_{g_{1}^{*}}^{2}(\Pi_{1})\mu_{1}^{2}\right)};\\
\rho_{1}^{*}=f_{1}^{*}(b_{1})=-E_{g_{1}^{*}}(\Pi_{1})\mu_{1}\kappa_{1}^{*}-\frac{\theta_{2}\alpha_{1}\beta_{1}\mu_{1}}{\phi_{1}+\theta_{2}\beta_{1}^{2}};
\end{gather}
for any belief $b_{2}\triangleq\mathcal{N}(\mu_{2},\sigma_{2}^{2})$, $g_{2}^{*}=g_{1}^{*}$; $(\kappa_{2}^{*},\rho_{2}^{*})=f_{2}^{*}(b_{2})=(0,0)$; and
\begin{align}
V_{1}^{2}&(b_{1})\nonumber\\
=&\,-\left(\theta_{1}+\theta_{2}\alpha_{1}^{2}-\frac{\theta_{2}^{2}\alpha_{1}^{2}\beta_{1}^{2}}{\phi_{1}+\theta_{2}\beta_{1}^{2}}\right)\mu_{1}^{2}-\theta_{2}\omega_{1}^{2}-(\theta_{1}+\theta_{2}\alpha_{1}^{2})\sigma_{1}^{2}\nonumber\\
&\,+\frac{\theta_{2}^{2}\alpha_{1}^{2}\beta_{1}^{2}E_{g_{1}^{*}}^{2}(\Pi_{1})\sigma_{1}^{4}}{(\phi_{1}+\theta_{2}\beta_{1}^{2})\left(E_{g_{1}^{*}}(\Pi_{1}^{2})\left(\mu_{1}^{2}+\frac{\lambda}{\lambda-1}\sigma_{1}^{2}\right)-E_{g_{1}^{*}}^{2}(\Pi_{1})\mu_{1}^{2}\right)}.
\label{eq54}
\end{align}
\label{th6}
\end{theorem}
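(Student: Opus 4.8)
The plan is to solve the game by backward induction over the two stages using the dynamic-programming characterization \eqref{eq5}, reducing it to a single cheap-talk optimization in the first stage. In the terminal stage the agent maximizes $-\theta_2 E(S_2^2\mid b_2)-\phi_2 E(A_2^2\mid b_2)$; the first term is action-independent and the second is nonpositive, so the unique (dominant) maximizer is $(\kappa_2^*,\rho_2^*)=(0,0)$, whence $A_2=0$ almost surely. Consequently the stage-2 reward does not depend on the adversary, its stage-2 strategy is payoff-irrelevant and may be set to $g_2^*=g_1^*$, and $V_2^2(b_2)=-\theta_2(\mu_2^2+\sigma_2^2)$.

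Next I would reduce the first stage to a two-moment problem. First, since the agent-optimal value is monotone decreasing in the observation-noise variance (a garbling argument: a noisier channel is a degradation of a less noisy one), the adversary saturates the mutual-information constraint \eqref{eq15}, i.e. $\delta_1^2=\pi_1^2\sigma_1^2/(\lambda-1)$, exactly as in Theorem~\ref{th3}. Under this binding constraint two facts are key: the posterior variance $\sigma_2^2=\alpha_1^2\sigma_1^2/\lambda+\omega_1^2$ is independent of $\pi_1$, so the variance contribution to $V_2^2$ is constant; and, writing $X:=S_1+C_1/\Pi_1$, one has $C_1/\Pi_1\sim\mathcal{N}(0,\sigma_1^2/(\lambda-1))$ independently of $\Pi_1$, so that $\hat S_1=\Pi_1 X$ with $X\sim\mathcal{N}(\mu_1,\frac{\lambda}{\lambda-1}\sigma_1^2)$ independent of $\Pi_1$. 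With this reparametrization $E(R_1+V_2^2)$ depends on the adversary only through the moments $m:=E_{g_1}(\Pi_1)$ and $q:=E_{g_1}(\Pi_1^2)$ and on the agent only through $(\kappa_1,\rho_1)$.

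I would then solve the resulting saddle point. For fixed $(m,q)$ the agent's objective is a strictly concave quadratic in $(\kappa_1,\rho_1)$, and its first-order conditions are linear; solving them yields precisely the stated closed forms for $\kappa_1^*$ and $\rho_1^*$, whose common denominator $q(\mu_1^2+\frac{\lambda}{\lambda-1}\sigma_1^2)-m^2\mu_1^2$ is positive because $q\geq m^2$. Substituting the agent's best response, the only strategy-dependent term of $V_1^2$ is the nonnegative information-gain term proportional to $m^2/[q(\mu_1^2+\frac{\lambda}{\lambda-1}\sigma_1^2)-m^2\mu_1^2]$, so minimizing $V_1^2$ is equivalent to maximizing $q/m^2=E_{g_1}(\Pi_1^2)/E_{g_1}^2(\Pi_1)$ over distributions supported in $[\varepsilon',\varepsilon]$. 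Since $\pi\mapsto\pi^2$ is convex, for a fixed mean the second moment is maximized by a two-point distribution on the endpoints $\{\varepsilon',\varepsilon\}$; the remaining one-dimensional maximization has stationary point $m^*=2\varepsilon'\varepsilon/(\varepsilon'+\varepsilon)$ and $q^*=\varepsilon'\varepsilon$, which translate into the endpoint probabilities $\varepsilon/(\varepsilon'+\varepsilon)$ and $\varepsilon'/(\varepsilon'+\varepsilon)$. In both regimes $\varepsilon'+\varepsilon\neq0$, so these are valid probabilities and $m^*\neq0$, which is why the babbling equilibrium of Theorem~\ref{th3} cannot arise here.

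Finally I would verify that this profile is a genuine, and the unique, SPE and then read off \eqref{eq54}. For the best-response consistency, fixing $f_1^*$ I would show that the conditional payoff $\psi(\pi):=E(R_1+V_2^2\mid\Pi_1=\pi)$ is concave quadratic in $\pi$ with leading coefficient $-(\phi_1+\theta_2\beta_1^2)(\mu_1^2+\frac{\lambda}{\lambda-1}\sigma_1^2)\kappa_1^{*2}<0$, and that $f_1^*$ places its vertex at the midpoint $(\varepsilon'+\varepsilon)/2$, so that $\psi(\varepsilon')=\psi(\varepsilon)=\min_{[\varepsilon',\varepsilon]}\psi$; hence the adversary is indifferent over $\{\varepsilon',\varepsilon\}$ and $g_1^*$ is a best response to $f_1^*$, confirming a saddle point (alternatively, Sion's minimax theorem applies, as the payoff is linear in the adversary's mixed strategy and strictly concave in $(\kappa_1,\rho_1)$ over convex sets). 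Uniqueness follows since strict concavity gives a unique agent best response, while $q/m^2$ attains its maximum over the compact convex moment body only on the upper boundary and only at the stated two-point distribution. Substituting $(m^*,q^*)$ into $V_1^2$ then gives \eqref{eq54}. The main obstacle is exactly this saddle-point verification: the minimax-optimal distribution is an interior point of the moment body and therefore is not forced by the adversary's (linear) best-response objective; it is an equilibrium only because the agent's equilibrium coefficients make the adversary precisely indifferent between the two extreme manipulations $\varepsilon'$ and $\varepsilon$, the delicate consistency condition that pins down the mixed equilibrium.
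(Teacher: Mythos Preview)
Your proposal is correct and reaches the same equilibrium, but by a genuinely different route. The paper works directly with Q-functions: it shows the payoff is non-increasing in $\delta_1^2$ pointwise in $(\kappa_1,\rho_1)$ (so the constraint binds), substitutes the agent's best responses in $\rho_1$ and then $\kappa_1$, observes that the resulting Q-function is a concave quadratic in $\pi_1$ (hence the adversary's support is $\{\varepsilon',\varepsilon\}$), and finally solves the single indifference equation $Q_1^2(b_1,\varepsilon',\kappa_1^*,\rho_1^*)=Q_1^2(b_1,\varepsilon,\kappa_1^*,\rho_1^*)$ for the mixing weight $p^*=\varepsilon/(\varepsilon'+\varepsilon)$. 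You instead reduce the problem to the first two moments $(m,q)=(E\Pi_1,E\Pi_1^2)$ via the reparametrization $\hat S_1=\Pi_1 X$, recognize the adversary's inner problem as maximizing the scalar $q/m^2$ over the moment body, and then verify the saddle point geometrically (vertex of $\psi$ at the midpoint). Your moment reduction makes the structure transparent---it explains in one stroke why the support lies on the endpoints (extreme points of the moment body at fixed $m$) and why the mixing probability is what it is (the stationary point of $q/m^2$ along that edge)---whereas the paper's computation is more mechanical but entirely self-contained. One small caveat: your garbling argument for constraint saturation is stated for the \emph{agent-optimal} value, which strictly speaking justifies saturation only in the Stackelberg (min--max) order; for the SPE you need the pointwise monotonicity in $\delta_1^2$ for fixed $(\kappa_1,\rho_1)$, which is immediate from the term $-(\phi_1+\theta_2\beta_1^2)\kappa_1^2\delta_1^2$ and which your saddle-point verification then confirms since $\kappa_1^*\neq 0$ in this regime.
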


The proofs of Theorems~\ref{th5}~\&~\ref{th6} are given in the appendix.


\begin{remark}
Different from the value functions (\ref{eq28}) and (\ref{eq41}), the value function (\ref{eq54}) {\it cannot} be decomposed into separable terms of mean and variance, which makes the extension of Theorem~\ref{th6} to $N\geq3$ difficult.
\end{remark}

\section{NUMERICAL RESULTS}
\label{sec5}
We illustrate the impact of strategic interaction on a time-invariant adversarial LQG problem. The parameters used for the evaluation are shown in Table~\ref{tab1}.

\begin{table}[b]
\caption{LQG Model Default Parameters}
\label{tab1}
\begin{center}
\begin{tabular}{|c||c|c|c|c|c|c|c|}
\hline
Parameter & $\mu_{1}$ & $\sigma_{1}^{2}$ & $\alpha$ & $\beta$ & $\omega^{2}$ & $\theta$ & $\phi$\\
\hline
Value & $0$ & $1$ & $-0.5$ & $-1.5$ & $1$ & $2$ & $1$\\
\hline
\end{tabular}
\end{center}
\end{table}

\begin{figure*}[t!]
\begin{minipage}[t]{0.3\textwidth}
\begin{center}
\centerline{\includegraphics[scale=0.35]{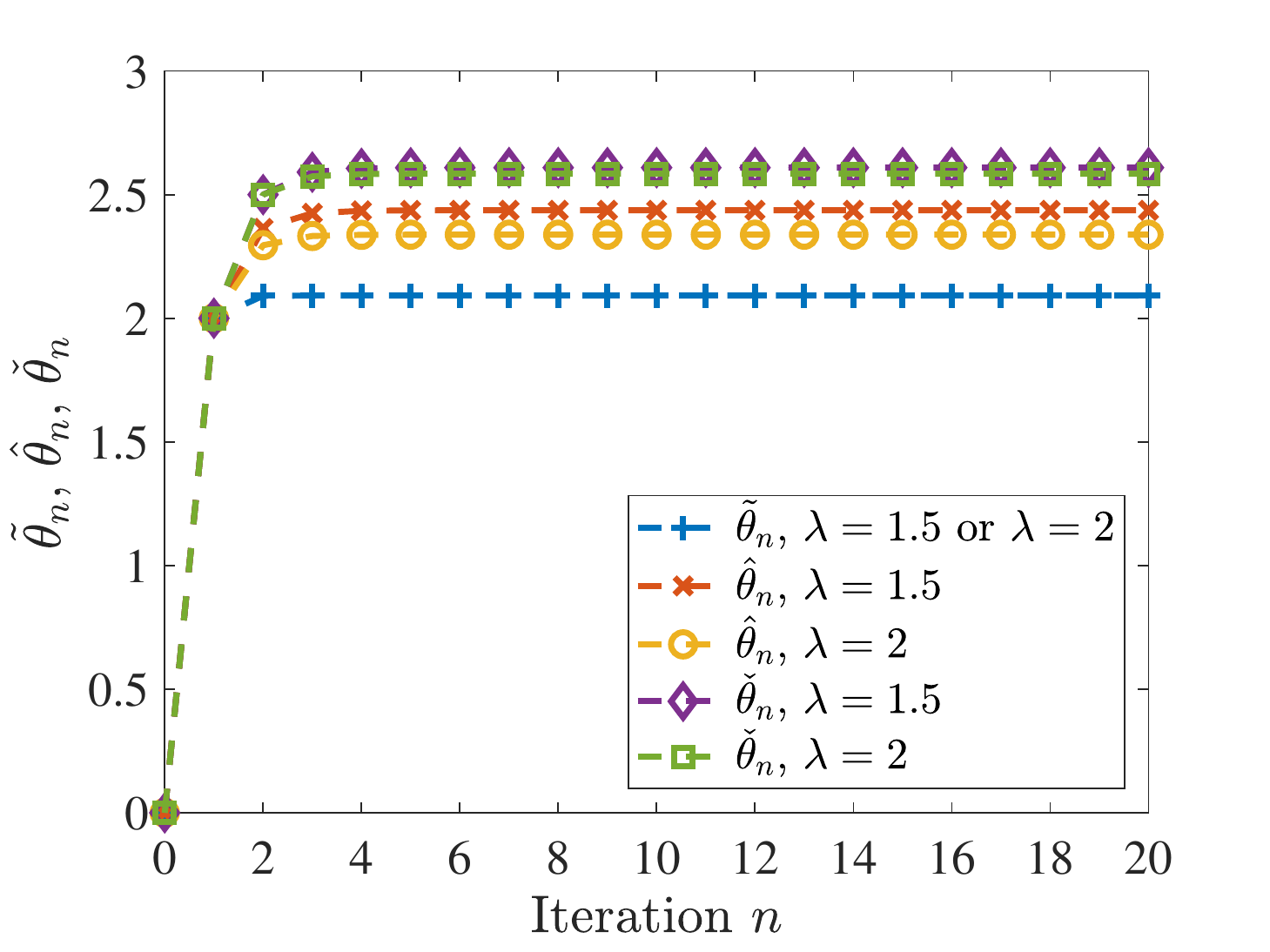}}
\caption{$\tilde{\theta}_{n}$, $\hat{\theta}_{n}$, $\check{\theta}_{n}$ computed as $L^{n}(0,0)$ and $J^{n}(0,0)$ v.s. the number of iterations $n$, for $\lambda=1.5$ and $\lambda=2$ ($\lambda>\alpha^{2}$).}
\label{figure4}
\end{center}
\end{minipage}
\hfill
\begin{minipage}[t]{0.3\textwidth}
\begin{center}
\centerline{\includegraphics[scale=0.35]{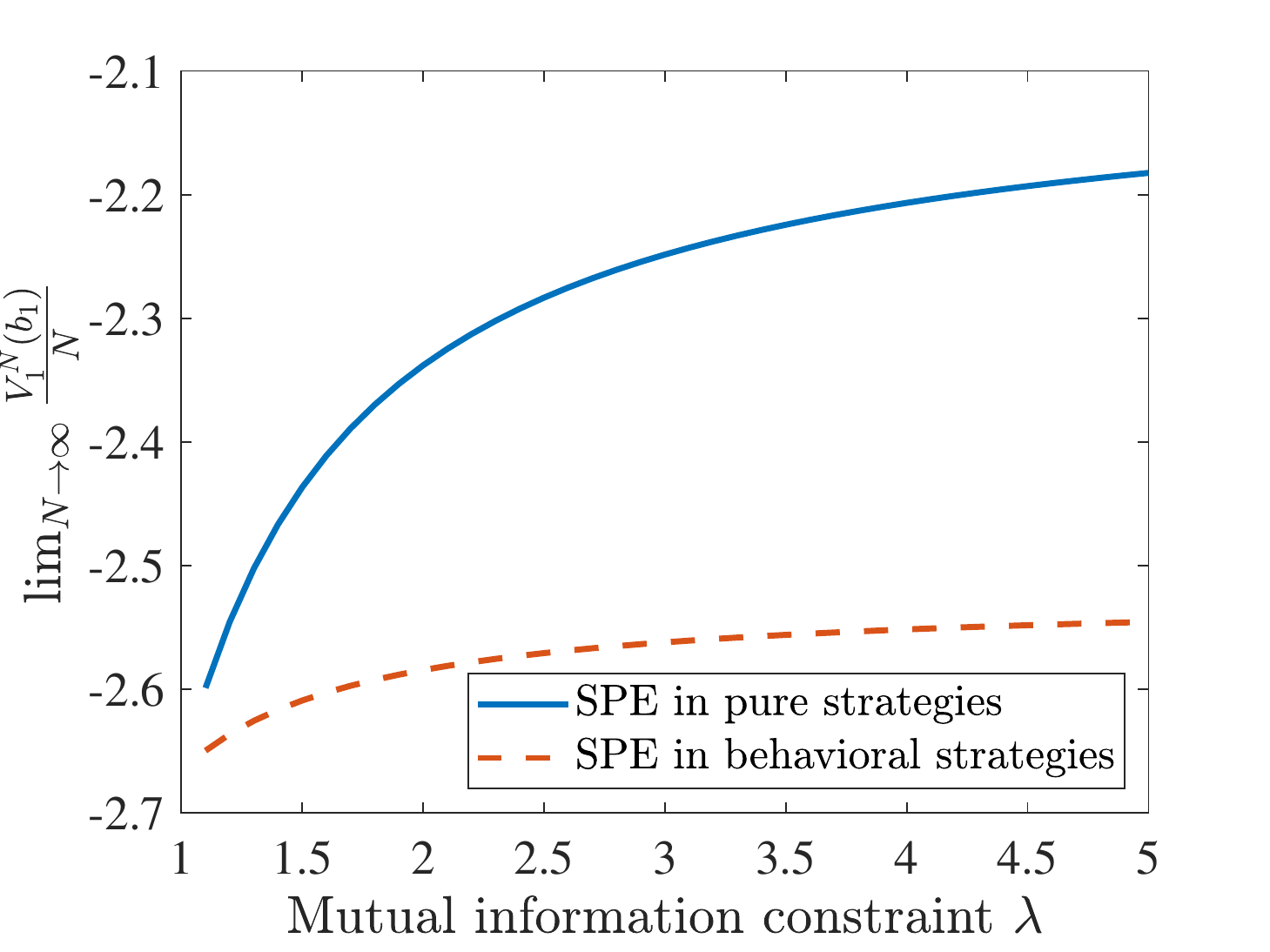}}
\caption{Expected average reward per stage v.s. mutual information constraint $\lambda$, for stationary SPEs in pure strategies and behavioral strategies.}
\label{figure3}
\end{center}
\end{minipage}
\hfill
\begin{minipage}[t]{0.3\textwidth}
\begin{center}
\centerline{\includegraphics[scale=0.35]{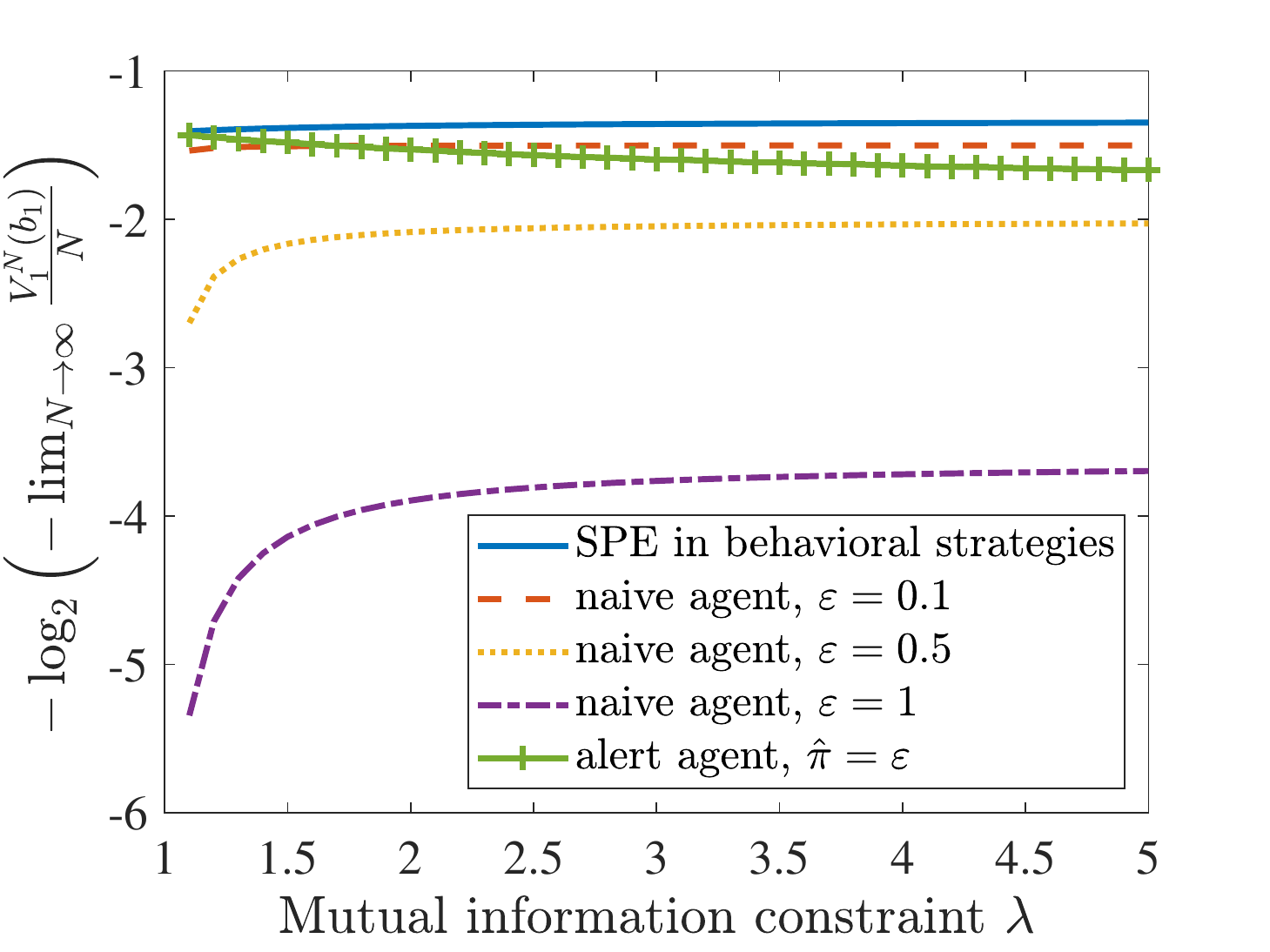}}
\caption{Expected average reward per stage for stationary SPE in behavioral strategies, that for a naive agent, and that for an alert agent v.s. mutual information constraint $\lambda$.}
\label{figure6}
\end{center}
\end{minipage}
\end{figure*}

We start with illustrating the convergence of the mappings in Propositions~\ref{pro2} and~\ref{pro3}. Fig.~\ref{figure4} shows the mappings $L^{n}(0,0)$ and $J^{n}(0,0)$ as functions of the iteration $n$ for $\lambda=1.5$ and $\lambda=2$. Both mappings increase monotonically starting from $(0,0)$ and converge to their least fixed points $(\tilde{\theta},\hat{\theta})$ and $(\tilde{\theta},\check{\theta})$, respectively, which confirms these propositions.

We continue with the evaluation of stationary SPEs for the time-invariant system. Fig.~\ref{figure3} shows the expected average reward per stage for a stationary SPE in pure strategies and that for a stationary SPE in behavioral strategies. Observe that the adversarial manipulation capability decreases as the mutual information lower bound $\lambda$ increases, and hence the expected average rewards increase. The results also confirm Property~\ref{prp4}, i.e., the expected average reward per stage for a stationary SPE in behavioral strategies {\it cannot} be higher than that for a stationary SPE in pure strategies.

Next, we assess the importance of strategic interaction on the agent's performance, as compared to an agent that is unaware of the attack~\cite{huang2016,russo2019}. Fig.~\ref{figure6} shows the expected average reward per stage for a stationary SPE in behavioral strategies, as per Corollary~\ref{cor4}, that for a naive agent under an optimal adversarial attack, and that for an alert agent under SPE adversarial behavioral strategy $g_{i}^{*}$ in Corollary~\ref{cor4}, for $-\varepsilon'=\varepsilon>0$. The naive agent is unaware of the adversarial manipulation, i.e., it uses the optimal LQR strategy (\ref{eq:LQR}). The corresponding optimal stationary adversarial strategy can be obtained through dynamic programming, and is the pure strategy:
\begin{align*}
(\pi_{i},\delta_{i}^{2})=g^A_{i}(b_{i})=\left(-\varepsilon,\frac{\varepsilon^{2}\sigma_{i}^{2}}{\lambda-1}\right).
\end{align*}
The alert agent suspects an adversary but does not act strategically despite the presence of an adversary. The alert agent assumes $\hat{\pi}_{i}=\hat{\pi}\not=0$, $\hat{\delta}_{i}^{2}=\frac{\hat{\pi}^{2}\sigma_{i}^{2}}{\lambda-1}$, and uses the corresponding best response strategy
\begin{align*}
(\kappa_{i},\rho_{i})=f_{i}^{C}(b_{i})=\left(-\frac{\tilde{\theta}\alpha\beta(\lambda-1)}{(\phi+\tilde{\theta}\beta^{2})\lambda\hat{\pi}},-\frac{\tilde{\theta}\alpha\beta\mu_{i}}{(\phi+\tilde{\theta}\beta^{2})\lambda}\right).
\end{align*}
The figure shows that the expected average reward per stage of the naive agent is always lower than that for the SPE. Clearly, as the mutual information lower bound $\lambda$ increases, the adversarial manipulation capability becomes weaker and the expected average rewards per stage increase. At the same time, we can observe that if the bound $\varepsilon$ of the manipulation coefficient is higher then the adversarial manipulation capability becomes stronger and therefore the expected average reward per stage of the naive agent decreases. Note that the limit of the expected average reward per stage of the naive agent does not exist when $\varepsilon$ is larger than a threshold due to the resulting instability of the control system. The poor performance of the naive agent is consistent with recent works on adversarial DRL~\cite{huang2016,russo2019}, where the naive DRL agents were found to perform poorly against strategic adversaries. The results for the SPE show, however, that an agent that is aware of the adversary can adjust its strategy to be resilient to adversarial attack. The figure also shows that the expected average reward per stage of the alert agent is always lower than that for the SPE since the alert agent does not adjust its best response strategically to the SPE adversarial strategy. Different from the SPE and the naive agent, it is interesting to observe that the alert agent's performance deteriorates as the adversarial constraint $\lambda$ increases. This is due to that the alert agent's strategy $f_{i}^{C}$ deviates more from the SPE strategy $f_{i}^{*}$, as shown in Corollary~\ref{cor4}, as $\lambda$ increases.



\section{CONCLUSION}
\label{sec6}
We proposed a game theoretic model to capture the strategic interaction, information asymmetry and system dynamics for LQG control under adversarial input subject to a mutual information constraint. We characterized the subgame perfect equilibria in pure strategies and in behavioral strategies, including stationary equilibria for time-invariant systems. Our results show that if an equilibrium exists then the agent can use an affine, pure strategy, but randomization enables the adversary to construct more powerful attacks, under a wider range of parameters, and forces the agent into a babbling equilibrium. Our numerical results show the importance of strategic interaction for LQG control, and highlight that an agent that is aware of an adversarial attack can be designed resilient. Our work could be extended in a number of interesting directions, including considering a non-scalar state dynamic system, and relaxing the assumption that the adversarial strategy is revealed to the agent after each stage. 

%

\appendix
\subsection{Proofs of Theorem~\ref{th1} and Corollary~\ref{cor1}}
\begin{proof}
We prove the result using backward dynamic programming. Recall that there is no feasible adversarial strategy for $\varepsilon'=\varepsilon=0$. Thus, it is sufficient to consider the cases
$\varepsilon'\not=\varepsilon$ or $\varepsilon'=\varepsilon\not=0$. 
In stage $N$ the value function for $b_{N}$ can be expressed as
\begin{align}
V_{N}^{N}(b_{N})=-\tilde{\theta}_{N}\mu_{N}^{2}-\hat{\theta}_{N}\sigma_{N}^{2},\label{a1}
\end{align}
where $\tilde{\theta}_{N}=\hat{\theta}_{N}=\theta_{N}$ from the update rules (\ref{eq21})-(\ref{eq23}). The pure strategies, which form an SPE and achieve the value function, consist of any pure strategy $g_{N}^{*}$ satisfying the adversarial constraints (\ref{eq14})-(\ref{eq15}), and the dominant pure strategy
\begin{align}
(\kappa_{N}^{*},\rho_{N}^{*})=f_{N}^{*}(b_{N})=(0,0).\label{a2}
\end{align}

In stage $N-1$, the Q-function of using $\pi_{N-1}\not=0$, $\delta_{N-1}^{2}$, $\kappa_{N-1}$, and $\rho_{N-1}$ given a belief $b_{N-1}$ is
\begin{align}
Q_{N-1}^{N}&(b_{N-1},\pi_{N-1},\delta_{N-1}^{2},\kappa_{N-1},\rho_{N-1})\nonumber\\
=&\,-\theta_{N-1}E(S_{N-1}^{2})-\phi_{N-1}E(A_{N-1}^{2})\nonumber\\
&\,-\tilde{\theta}_{N}E(\Lambda_{\mu}(b_{N-1},\pi_{N-1},\delta_{N-1}^{2},\hat{S}_{N-1},A_{N-1}))^{2}\nonumber\\
&\,-\hat{\theta}_{N}\Lambda_{\nu}(b_{N-1},\pi_{N-1},\delta_{N-1}^{2})\nonumber\\
=&\,-(\theta_{N-1}+\tilde{\theta}_{N}\alpha_{N-1}^{2})\mu_{N-1}^{2}-\hat{\theta}_{N}\omega_{N-1}^{2}\nonumber\\
&\,-(\theta_{N-1}+\hat{\theta}_{N}\alpha_{N-1}^{2})\sigma_{N-1}^{2}\nonumber\\
&\,-(\phi_{N-1}+\tilde{\theta}_{N}\beta_{N-1}^{2})(\pi_{N-1}\kappa_{N-1}\mu_{N-1}+\rho_{N-1})^{2}\nonumber\\
&\,-2\tilde{\theta}_{N}\alpha_{N-1}\beta_{N-1}\mu_{N-1}(\pi_{N-1}\kappa_{N-1}\mu_{N-1}+\rho_{N-1})\nonumber\\
&\,+(\hat{\theta}_{N}-\tilde{\theta}_{N})\alpha_{N-1}^{2}\frac{\pi_{N-1}^{2}\sigma_{N-1}^{2}}{\pi_{N-1}^{2}\sigma_{N-1}^{2}+\delta_{N-1}^{2}}\sigma_{N-1}^{2}\nonumber\\
&\,-(\phi_{N-1}+\tilde{\theta}_{N}\beta_{N-1}^{2})\kappa_{N-1}^{2}(\pi_{N-1}^{2}\sigma_{N-1}^{2}+\delta_{N-1}^{2})\nonumber\\
&\,-2\tilde{\theta}_{N}\alpha_{N-1}\beta_{N-1}\pi_{N-1}\kappa_{N-1}\sigma_{N-1}^{2},\label{a3}
\end{align}
where the expectations are induced by the given $b_{N-1}$, $\pi_{N-1}$, $\delta_{N-1}^{2}$, $\kappa_{N-1}$, and $\rho_{N-1}$.

Given $b_{N-1}$, $\pi_{N-1}\not=0$, $\delta_{N-1}^{2}$, and $\kappa_{N-1}$, the Q-function $Q_{N-1}^{N}$ is a concave quadratic function of $\rho_{N-1}$. As the best response to maximize the agent reward, we can substitute $\rho_{N-1}$ in terms of $b_{N-1}$, $\pi_{N-1}$, and $\kappa_{N-1}$ as
\begin{align}
\rho_{N-1}=-\pi_{N-1}\kappa_{N-1}\mu_{N-1}-\frac{\tilde{\theta}_{N}\alpha_{N-1}\beta_{N-1}}{\phi_{N-1}+\tilde{\theta}_{N}\beta_{N-1}^{2}}\mu_{N-1}.\label{a4}
\end{align}

Thus, it is sufficient to consider the Q-function
\begin{align}
Q_{N-1}^{N}&(b_{N-1},\pi_{N-1},\delta_{N-1}^{2},\kappa_{N-1})\nonumber\\
=&\,-\left(\theta_{N-1}+\tilde{\theta}_{N}\alpha_{N-1}^{2}-\frac{\tilde{\theta}_{N}^{2}\alpha_{N-1}^{2}\beta_{N-1}^{2}}{\phi_{N-1}+\tilde{\theta}_{N}\beta_{N-1}^{2}}\right)\mu_{N-1}^{2}\nonumber\\
&\,-(\theta_{N-1}+\hat{\theta}_{N}\alpha_{N-1}^{2})\sigma_{N-1}^{2}-\hat{\theta}_{N}\omega_{N-1}^{2}\nonumber\\
&\,+(\hat{\theta}_{N}-\tilde{\theta}_{N})\alpha_{N-1}^{2}\frac{\pi_{N-1}^{2}\sigma_{N-1}^{2}}{\pi_{N-1}^{2}\sigma_{N-1}^{2}+\delta_{N-1}^{2}}\sigma_{N-1}^{2}\nonumber\\
&\,-(\phi_{N-1}+\tilde{\theta}_{N}\beta_{N-1}^{2})\kappa_{N-1}^{2}(\pi_{N-1}^{2}\sigma_{N-1}^{2}+\delta_{N-1}^{2})\nonumber\\
&\,-2\tilde{\theta}_{N}\alpha_{N-1}\beta_{N-1}\pi_{N-1}\kappa_{N-1}\sigma_{N-1}^{2}.\label{a5}
\end{align}

As shown in Property~\ref{prp1}, the belief variance is always positive. Therefore, given $b_{N-1}$, $\pi_{N-1}\not=0$, and $\delta_{N-1}^{2}$, the Q-function $Q_{N-1}^{N}$ is a concave quadratic function of $\kappa_{N-1}$, and the best response of the agent in terms of $\kappa_{N-1}$ for $b_{N-1}$, $\pi_{N-1}$, and $\delta_{N-1}^{2}$ can be expressed as 
\begin{align}
\kappa_{N-1}=-\frac{\tilde{\theta}_{N}\alpha_{N-1}\beta_{N-1}\pi_{N-1}\sigma_{N-1}^{2}}{(\phi_{N-1}+\tilde{\theta}_{N}\beta_{N-1}^{2})(\pi_{N-1}^{2}\sigma_{N-1}^{2}+\delta_{N-1}^{2})}\not=0.\label{a6}
\end{align}

Given $b_{N-1}$, $\pi_{N-1}\not=0$, and $\kappa_{N-1}\not=0$, the Q-function $Q_{N-1}^{N}$ is a decreasing function of $\delta_{N-1}^{2}$. The best response of the adversary in terms of $\delta_{N-1}^{2}$ for $b_{N-1}$ and $\pi_{N-1}$ is thus
\begin{align}
\delta_{N-1}^{2}=\frac{\pi_{N-1}^{2}\sigma_{N-1}^{2}}{\lambda-1}.\label{a7}
\end{align}

Consequently, it is sufficient to consider the Q-function
\begin{align}
Q_{N-1}^{N}&(b_{N-1},\pi_{N-1},\kappa_{N-1})\nonumber\\
=&\,-\left(\theta_{N-1}+\tilde{\theta}_{N}\alpha_{N-1}^{2}-\frac{\tilde{\theta}_{N}^{2}\alpha_{N-1}^{2}\beta_{N-1}^{2}}{\phi_{N-1}+\tilde{\theta}_{N}\beta_{N-1}^{2}}\right)\mu_{N-1}^{2}\nonumber\\
&\,-(\theta_{N-1}+\hat{\theta}_{N}\alpha_{N-1}^{2})\sigma_{N-1}^{2}-\hat{\theta}_{N}\omega_{N-1}^{2}\nonumber\\
&\,+(\hat{\theta}_{N}-\tilde{\theta}_{N})\alpha_{N-1}^{2}\frac{\lambda-1}{\lambda}\sigma_{N-1}^{2}\nonumber\\
&\,-(\phi_{N-1}+\tilde{\theta}_{N}\beta_{N-1}^{2})\kappa_{N-1}^{2}\frac{\lambda}{\lambda-1}\pi_{N-1}^{2}\sigma_{N-1}^{2}\nonumber\\
&\,-2\tilde{\theta}_{N}\alpha_{N-1}\beta_{N-1}\pi_{N-1}\kappa_{N-1}\sigma_{N-1}^{2}.\label{a8}
\end{align}
The pure strategies $(g_{N-1}^{*},f_{N-1}^{*})$ form an SPE if $\pi_{N-1}^{*}=g_{N-1}^{*}(b_{N-1})$ and $\kappa_{N-1}^{*}=f_{N-1}^{*}(b_{N-1})$ satisfy
\begin{gather}
\pi_{N-1}^{*}=\arg\min_{\varepsilon'\leq\pi_{N-1}\leq\varepsilon,\pi_{N-1}\not=0}Q_{N-1}^{N}(b_{N-1},\pi_{N-1},\kappa_{N-1}^{*}),\label{a9}\\
\kappa_{N-1}^{*}=\arg\max_{\kappa_{N-1}\in\mathbb{R}}Q_{N-1}^{N}(b_{N-1},\pi_{N-1}^{*},\kappa_{N-1}).\label{a10}
\end{gather}
If $\varepsilon'=\varepsilon\not=0$, $\pi_{N-1}^{*}=\varepsilon=\varepsilon'=g_{N-1}^{*}(b_{N-1})$ is a dominant adversarial strategy. Therefore, the SPE must exist. The pure strategies $(g_{N-1}^{*},f_{N-1}^{*})$ can be obtained by substituting $\pi_{N-1}^{*}=\varepsilon$ into (\ref{a4}), (\ref{a6}), and (\ref{a7}) as
\begin{gather*}
\delta_{N-1}^{2*}=g_{N-1}^{*}(b_{N-1})=\frac{\varepsilon^{2}\sigma_{N-1}^{2}}{\lambda-1};\\
\kappa_{N-1}^{*}=f_{N-1}^{*}(b_{N-1})=-\frac{\tilde{\theta}_{N}\alpha_{N-1}\beta_{N-1}(\lambda-1)}{(\phi_{N-1}+\tilde{\theta}_{N}\beta_{N-1}^{2})\lambda\varepsilon};\\
\rho_{N-1}^{*}=f_{N-1}^{*}(b_{N-1})=-\frac{\tilde{\theta}_{N}\alpha_{N-1}\beta_{N-1}\mu_{N-1}}{(\phi_{N-1}+\tilde{\theta}_{N}\beta_{N-1}^{2})\lambda}.
\end{gather*}
The value function $V_{N-1}^{N}(b_{N-1})$ can be obtained by substituting $\pi_{N-1}^{*}$ and $\kappa_{N-1}^{*}$ into the Q-function (\ref{a8}) as
\begin{gather*}
V_{N-1}^{N}(b_{N-1})=-\tilde{\theta}_{N-1}\mu_{N-1}^{2}-\hat{\theta}_{N-1}\sigma_{N-1}^{2}-\hat{\theta}_{N}\omega_{N-1}^{2}.
\end{gather*}
Let us now consider the case $\varepsilon'\not=\varepsilon$. Assume that there exists an SPE with $\pi_{N-1}^{*}=g_{N-1}^{*}(b_{N-1})\not=0$. As the best response, solving (\ref{a10}) leads to $\kappa_{N-1}^{*}=-\frac{\tilde{\theta}_{N}\alpha_{N-1}\beta_{N-1}(\lambda-1)}{(\phi_{N-1}+\tilde{\theta}_{N}\beta_{N-1}^{2})\lambda\pi_{N-1}^{*}}$. For all $\pi_{N-1}\not=\pi_{N-1}^{*}$ and $\pi_{N-1}\not=0$ we have
\begin{align*}
Q_{N-1}^{N}(b_{N-1},\pi_{N-1}^{*},\kappa_{N-1}^{*})>Q_{N-1}^{N}(b_{N-1},\pi_{N-1},\kappa_{N-1}^{*}).
\end{align*}
Thus, condition (\ref{a9}) cannot hold and hence the assumption is not true, i.e., there is no pure strategy SPE in this case.

In the case of $\varepsilon'=\varepsilon\not=0$, Theorem~\ref{th1} and Corollary~\ref{cor1} can be justified in the remaining stages of the backward dynamic programming by using the same analysis.
\end{proof}

\subsection{Proofs of Theorem~\ref{th3} and Corollary~\ref{cor3}}
\begin{proof}
We prove the result by verifying that the given strategies form an SPE. The dominant pure strategy of the agent and the value function in the final stage are as shown in the proofs of Theorem~\ref{th1} and Corollary~\ref{cor1}. Note that any behavioral adversarial strategy satisfying (\ref{eq14})-(\ref{eq15}) can be $g_{N}^{*}$ since it has no impact on the agent reward. Therefore, Theorem~\ref{th3} and Corollary~\ref{cor3} hold in the final stage.

For stage $N-1$, we first show that it is sufficient to consider a pure agent strategy with an affine form. A general behavioral agent strategy $f_{N-1}$ decides an action $a_{N-1}$ based on the belief $b_{N-1}$ and the observation $\hat{s}_{N-1}$ with the probability measure $f_{N-1}(a_{N-1}|b_{N-1},\hat{s}_{N-1})$. Given a belief $b_{N-1}$, a behavioral adversarial strategy $g_{N-1}$, an observation $\hat{s}_{N-1}$, and an action $a_{N-1}$ from the support set of a behavioral agent strategy $f_{N-1}$, the Q-function is
\begin{align}
Q_{N-1}^{N}&(b_{N-1},g_{N-1},\hat{s}_{N-1},a_{N-1})\nonumber\\
=&\,-\theta_{N-1}E_{b_{N-1}}(S_{N-1}^{2})-\phi_{N-1}a_{N-1}^{2}\nonumber\\
&\,-\tilde{\theta}_{N}E_{g_{N-1}}(\Lambda_{\mu}(b_{N-1},\Pi_{N-1},\Delta_{N-1}^{2},\hat{s}_{N-1},a_{N-1}))^{2}\nonumber\\
&\,-\check{\theta}_{N}E_{g_{N-1}}(\Lambda_{\nu}(b_{N-1},\Pi_{N-1},\Delta_{N-1}^{2})\nonumber)\\
=&\,-\theta_{N-1}(\mu_{N-1}^{2}+\sigma_{N-1}^{2})-(\phi_{N-1}+\tilde{\theta}_{N}\beta_{N-1}^{2})a_{N-1}^{2}\nonumber\\
&\,-2\tilde{\theta}_{N}E_{g_{N-1}}\left(\frac{\Pi_{N-1}\sigma_{N-1}^{2}\hat{s}_{N-1}+\mu_{N-1}\Delta_{N-1}^{2}}{\Pi_{N-1}^{2}\sigma_{N-1}^{2}+\Delta_{N-1}^{2}}\right)\nonumber\\
&\quad\;\alpha_{N-1}\beta_{N-1}a_{N-1}\nonumber\\
&\,-\tilde{\theta}_{N}E_{g_{N-1}}\left(\frac{\Pi_{N-1}\sigma_{N-1}^{2}\hat{s}_{N-1}+\mu_{N-1}\Delta_{N-1}^{2}}{\Pi_{N-1}^{2}\sigma_{N-1}^{2}+\Delta_{N-1}^{2}}\right)^2\alpha_{N-1}^{2}\nonumber\\
&\,-\check{\theta}_{N}E_{g_{N-1}}\left(\frac{\alpha_{N-1}^{2}\sigma_{N-1}^{2}\Delta_{N-1}^{2}}{\Pi_{N-1}^{2}\sigma_{N-1}^{2}+\Delta_{N-1}^{2}}\right)-\check{\theta}_{N}\omega_{N-1}^{2},
\label{a16}
\end{align}
which is a concave quadratic function of $a_{N-1}$. As the best response to maximize the agent reward, the support set of the behavioral agent strategy is a singleton, i.e., it is sufficient to use a pure agent strategy, which has an affine form as
\begin{align}
a_{N-1}=&\,-\frac{\tilde{\theta}_{N}\alpha_{N-1}\beta_{N-1}E_{g_{N-1}}\left(\frac{\Pi_{N-1}\sigma_{N-1}^{2}}{\Pi_{N-1}^{2}\sigma_{N-1}^{2}+\Delta_{N-1}^{2}}\right)}{\phi_{N-1}+\tilde{\theta}_{N}\beta_{N-1}^{2}}\hat{s}_{N-1}\nonumber\\
&\,-\frac{\tilde{\theta}_{N}\alpha_{N-1}\beta_{N-1}E_{g_{N-1}}\left(\frac{\Delta_{N-1}^{2}}{\Pi_{N-1}^{2}\sigma_{N-1}^{2}+\Delta_{N-1}^{2}}\right)}{\phi_{N-1}+\tilde{\theta}_{N}\beta_{N-1}^{2}}\mu_{N-1}.
\label{a17}
\end{align}

Assume that an SPE in behavioral strategies consists of $\kappa_{N-1}^{*}=f_{N-1}^{*}(b_{N-1})=0$ and $\rho_{N-1}^{*}=f_{N-1}^{*}(b_{N-1})=-\frac{\tilde{\theta}_{N}\alpha_{N-1}\beta_{N-1}\mu_{N-1}}{\phi_{N-1}+\tilde{\theta}_{N}\beta_{N-1}^{2}}$. Given $b_{N-1}$, $(\pi_{N-1},\delta_{N-1}^{2})$ in the support set of a behavioral adversarial strategy $g_{N-1}$, $\kappa_{N-1}^{*}$, and $\rho_{N-1}^{*}$, we have the following Q-function:
\begin{align}
Q_{N-1}^{N}&(b_{N-1},\pi_{N-1},\delta_{N-1}^{2},\kappa_{N-1}^{*},\rho_{N-1}^{*})\nonumber\\
=&\,-\left(\theta_{N-1}+\tilde{\theta}_{N}\alpha_{N-1}^{2}-\frac{\tilde{\theta}_{N}^{2}\alpha_{N-1}^{2}\beta_{N-1}^{2}}{\phi_{N-1}+\tilde{\theta}_{N}\beta_{N-1}^{2}}\right)\mu_{N-1}^{2}\nonumber\\
&\,-(\theta_{N-1}+\check{\theta}_{N}\alpha_{N-1}^{2})\sigma_{N-1}^{2}-\check{\theta}_{N}\omega_{N-1}^{2}\nonumber\\
&\,+(\check{\theta}_{N}-\tilde{\theta}_{N})\alpha_{N-1}^{2}\frac{\pi_{N-1}^{2}\sigma_{N-1}^{2}}{\pi_{N-1}^{2}\sigma_{N-1}^{2}+\delta_{N-1}^{2}}\sigma_{N-1}^{2}.
\label{a18}
\end{align}
From Property~\ref{prp4} and the adversarial constraints (\ref{eq14})-(\ref{eq15}), we have
\begin{align}
&\left(\pi_{N-1}\not=0,\delta_{N-1}^{2}=\frac{\pi_{N-1}^{2}\sigma_{N-1}^{2}}{\lambda-1}\right)\nonumber\\
&\;=\arg\min_{\left(\pi_{N-1},\delta_{N-1}^{2}\right)}Q_{N-1}^{N}(b_{N-1},\pi_{N-1},\delta_{N-1}^{2},\kappa_{N-1}^{*},\rho_{N-1}^{*}).
\label{a19}
\end{align}
Therefore, any behavioral adversarial strategy is the best response of $f_{N-1}^{*}$ if its support set consists of two or more elements of $\left(\pi_{N-1}\not=0,\delta_{N-1}^{2}=\frac{\pi_{N-1}^{2}\sigma_{N-1}^{2}}{\lambda-1}\right)$.

Assume that an SPE consists of a behavioral adversarial strategy $g_{N-1}^{*}(\cdot|b_{N-1})$, which is defined on a support set containing two or more elements of $\left(\pi_{N-1}\not=0,\delta_{N-1}^{2}=\frac{\pi_{N-1}^{2}\sigma_{N-1}^{2}}{\lambda-1}\right)$, and satisfies $E_{g_{N-1}^{*}}(\Pi_{N-1})=0$. Given $b_{N-1}$, $g_{N-1}^{*}$, $\kappa_{N-1}$, and $\rho_{N-1}$, we have the following Q-function:
\begin{align}
Q_{N-1}^{N}&(b_{N-1},g_{N-1}^{*},\kappa_{N-1},\rho_{N-1})\nonumber\\
=&\,-(\theta_{N-1}+\tilde{\theta}_{N}\alpha_{N-1}^{2})\mu_{N-1}^{2}-\check{\theta}_{N}\omega_{N-1}^{2}\nonumber\\
&\,-\left(\theta_{N-1}+\check{\theta}_{N}\alpha_{N-1}^{2}-(\check{\theta}_{N}-\tilde{\theta}_{N})\alpha_{N-1}^{2}\frac{\lambda-1}{\lambda}\right)\sigma_{N-1}^{2}\nonumber\\
&\,-(\phi_{N-1}+\tilde{\theta}_{N}\beta_{N-1}^{2})E_{g_{N-1}^{*}}(\Pi_{N-1}^{2})\nonumber\\
&\quad\left(\mu_{N-1}^{2}+\frac{\lambda}{\lambda-1}\sigma_{N-1}^{2}\right)\kappa_{N-1}^{2}\nonumber\\
&\,-(\phi_{N-1}+\tilde{\theta}_{N}\beta_{N-1}^{2})\rho_{N-1}^{2}\nonumber\\
&\,-2\tilde{\theta}_{N}\alpha_{N-1}\beta_{N-1}\mu_{N-1}\rho_{N-1}.
\label{a20}
\end{align}
The best response of $g_{N-1}^{*}$ is
\begin{align}
&\left(\kappa_{N-1}=0,\rho_{N-1}=-\frac{\tilde{\theta}_{N}\alpha_{N-1}\beta_{N-1}\mu_{N-1}}{\phi_{N-1}+\tilde{\theta}_{N}\beta_{N-1}^{2}}\right)\nonumber\\
&\;=\arg\max_{(\kappa_{N-1},\rho_{N-1})}Q_{N-1}^{N}(b_{N-1},g_{N-1}^{*},\kappa_{N-1},\rho_{N-1}).
\label{a21}
\end{align}

It follows from (\ref{a19}) and (\ref{a21}) that a behavioral adversarial strategy $g_{N-1}^{*}(\cdot|b_{N-1})$, which is defined on a support set containing two or more elements of $\left(\pi_{N-1}\not=0,\delta_{N-1}^{2}=\frac{\pi_{N-1}^{2}\sigma_{N-1}^{2}}{\lambda-1}\right)$ and satisfies $E_{g_{N-1}^{*}}(\Pi_{N-1})=0$, and a pure agent strategy $(\kappa_{N-1}^{*},\rho_{N-1}^{*})=f_{N-1}^{*}(b_{N-1})=\left(0,-\frac{\tilde{\theta}_{N}\alpha_{N-1}\beta_{N-1}\mu_{N-1}}{\phi_{N-1}+\tilde{\theta}_{N}\beta_{N-1}^{2}}\right)$ form an SPE in stage $N-1$. Furthermore, the value function in this stage is
\begin{align*}
V_{N-1}^{N}(b_{N-1})=-\tilde{\theta}_{N-1}\mu_{N-1}^{2}-\check{\theta}_{N-1}\sigma_{N-1}^{2}-\check{\theta}_{N}\omega_{N-1}^{2}.
\end{align*}
Thus, Theorem~\ref{th3} and Corollary~\ref{cor3} hold in stage $N-1$.

In the remaining stages of the backward dynamic programming we can always justify that the solution of the SPE in behavioral strategies from Theorem~\ref{th3} and the value function from Corollary~\ref{cor3} hold following the same analysis as used in stage $N-1$.
\end{proof}

\subsection{Proofs of Theorems~\ref{th5} and \ref{th6}}
\begin{proof}
To prove Theorems~\ref{th5} and \ref{th6}, it is sufficient to consider a two-stage problem, i.e., $N=2$.
The solution of the final stage is the same as in the proofs of Theorem~\ref{th3} and Corollary~\ref{cor3}, and therefore is omitted here. Theorems~\ref{th5} and \ref{th6} hold in the final stage.
Furthermore, as shown in the proofs of Theorem~\ref{th3} and Corollary~\ref{cor3}, it is sufficient to consider a pure agent strategy with an affine form in the first stage.

Given $b_{1}$, $(\pi_{1},\delta_{1}^{2})$ in the support set of a behavioral adversarial strategy $g_{1}$, $\kappa_{1}$, and $\rho_{1}$, the Q-function is
\begin{align}
Q_{1}^{2}&(b_{1},\pi_{1},\delta_{1}^{2},\kappa_{1},\rho_{1})\nonumber\\
=&\,-(\theta_{1}+\theta_{2}\alpha_{1}^{2})\mu_{1}^{2}-\theta_{2}\omega_{1}^{2}-(\theta_{1}+\theta_{2}\alpha_{1}^{2})\sigma_{1}^{2}\nonumber\\
&\,-(\phi_{1}+\theta_{2}\beta_{1}^{2})(\pi_{1}\kappa_{1}\mu_{1}+\rho_{1})^{2}-2\theta_{2}\alpha_{1}\beta_{1}\mu_{1}(\pi_{1}\kappa_{1}\mu_{1}+\rho_{1})\nonumber\\
&\,-(\phi_{1}+\theta_{2}\beta_{1}^{2})\kappa_{1}^{2}(\pi_{1}^{2}\sigma_{1}^{2}+\delta_{1}^{2})-2\theta_{2}\alpha_{1}\beta_{1}\pi_{1}\kappa_{1}\sigma_{1}^{2}.
\label{a23}
\end{align}

This Q-function is non-increasing in $\delta_{1}^{2}$ for any given $b_{1}$, $\pi_{1}$, $\kappa_{1}$, and $\rho_{1}$. As the best response to minimize the agent reward, it is sufficient to consider a behavioral adversarial strategy defined on a non-singleton support set of $\left(\pi_{1}\not=0,\delta_{1}^{2}=\frac{\pi_{1}^{2}\sigma_{1}^{2}}{\lambda-1}\right)$.

Given $b_{1}$, $g_{1}$ with a non-singleton support set of $\left(\pi_{1}\not=0,\delta_{1}^{2}=\frac{\pi_{1}^{2}\sigma_{1}^{2}}{\lambda-1}\right)$, $\kappa_{1}$, and $\rho_{1}$, the Q-function is
\begin{align}
Q_{1}^{2}&(b_{1},g_{1},\kappa_{1},\rho_{1})\nonumber\\
=&\,-(\theta_{1}+\theta_{2}\alpha_{1}^{2})\mu_{1}^{2}-\theta_{2}\omega_{1}^{2}-(\theta_{1}+\theta_{2}\alpha_{1}^{2})\sigma_{1}^{2}\nonumber\\
&\,-(\phi_{1}+\theta_{2}\beta_{1}^{2})E_{g_{1}}(\Pi_{1}^{2})\left(\mu_{1}^{2}+\frac{\lambda}{\lambda-1}\sigma_{1}^{2}\right)\kappa_{1}^{2}\nonumber\\
&\,-2(\phi_{1}+\theta_{2}\beta_{1}^{2})E_{g_{1}}(\Pi_{1})\mu_{1}\kappa_{1}\rho_{1}-(\phi_{1}+\theta_{2}\beta_{1}^{2})\rho_{1}^{2}\nonumber\\
&\,-2\theta_{2}\alpha_{1}\beta_{1}E_{g_{1}}(\Pi_{1})(\mu_{1}^{2}+\sigma_{1}^{2})\kappa_{1}-2\theta_{2}\alpha_{1}\beta_{1}\mu_{1}\rho_{1}.
\label{a24}
\end{align}
This is a concave quadratic function of $\rho_{1}$ when $b_{1}$, $g_{1}$, and $\kappa_{1}$ are fixed. As the best response to maximize the agent reward, we can substitute $\rho_{1}$ with
\begin{align}
\rho_{1}=-E_{g_{1}}(\Pi_{1})\mu_{1}\kappa_{1}-\frac{\theta_{2}\alpha_{1}\beta_{1}\mu_{1}}{\phi_{1}+\theta_{2}\beta_{1}^{2}}.
\label{a25}
\end{align}
Then the Q-function (\ref{a24}) reduces to
\begin{align}
Q_{1}^{2}&(b_{1},g_{1},\kappa_{1})\nonumber\\
=&\,-\left(\theta_{1}+\theta_{2}\alpha_{1}^{2}-\frac{\theta_{2}^{2}\alpha_{1}^{2}\beta_{1}^{2}}{\phi_{1}+\theta_{2}\beta_{1}^{2}}\right)\mu_{1}^{2}-\theta_{2}\omega_{1}^{2}-(\theta_{1}+\theta_{2}\alpha_{1}^{2})\sigma_{1}^{2}\nonumber\\
&\,-(\phi_{1}+\theta_{2}\beta_{1}^{2})\nonumber\\
&\quad\left(E_{g_{1}}(\Pi_{1}^{2})\left(\mu_{1}^{2}+\frac{\lambda}{\lambda-1}\sigma_{1}^{2}\right)-E_{g_{1}}^{2}(\Pi_{1})\mu_{1}^{2}\right)\kappa_{1}^{2}\nonumber\\
&\,-2\theta_{2}\alpha_{1}\beta_{1}E_{g_{1}}(\Pi_{1})\sigma_{1}^{2}\kappa_{1}.
\label{a26}
\end{align}
This is also a concave quadratic function of $\kappa_{1}$ when $b_{1}$ and $g_{1}$ are fixed. As the best response to maximize the agent reward, we can substitute $\kappa_{1}$ with
\begin{align}
\kappa_{1}=\frac{-\theta_{2}\alpha_{1}\beta_{1}E_{g_{1}}(\Pi_{1})\sigma_{1}^{2}}{(\phi_{1}+\theta_{2}\beta_{1}^{2})\left(E_{g_{1}}(\Pi_{1}^{2})\left(\mu_{1}^{2}+\frac{\lambda}{\lambda-1}\sigma_{1}^{2}\right)-E_{g_{1}}^{2}(\Pi_{1})\mu_{1}^{2}\right)}.
\end{align}
Since we consider $0\leq\varepsilon'<\varepsilon$ or $\varepsilon'<\varepsilon\leq0$ and the behavioral adversarial strategy has a non-singleton support set, $E_{g_{1}}(\Pi_{1})\not=0$ and $\kappa_{1}\not=0$ in these cases.

We then study the support set of a behavioral adversarial strategy. Given $b_{1}$, $\left(\pi_{1}\not=0,\delta_{1}^{2}=\frac{\pi_{1}^{2}\sigma_{1}^{2}}{\lambda-1}\right)$ in the support set of a behavioral adversarial strategy $g_{1}$, $\kappa_{1}\not=0$, and $\rho_{1}$, the Q-function is
\begin{align}
Q_{1}^{2}&(b_{1},\pi_{1},\kappa_{1},\rho_{1})\nonumber\\
=&\,-(\theta_{1}+\theta_{2}\alpha_{1}^{2})\mu_{1}^{2}-\theta_{2}\omega_{1}^{2}-(\theta_{1}+\theta_{2}\alpha_{1}^{2})\sigma_{1}^{2}\nonumber\\
&\,-(\phi_{1}+\theta_{2}\beta_{1}^{2})(\pi_{1}\kappa_{1}\mu_{1}+\rho_{1})^{2}-2\theta_{2}\alpha_{1}\beta_{1}\mu_{1}(\pi_{1}\kappa_{1}\mu_{1}+\rho_{1})\nonumber\\
&\,-(\phi_{1}+\theta_{2}\beta_{1}^{2})\kappa_{1}^{2}\frac{\lambda}{\lambda-1}\pi_{1}^{2}\sigma_{1}^{2}-2\theta_{2}\alpha_{1}\beta_{1}\pi_{1}\kappa_{1}\sigma_{1}^{2}.
\end{align}
This is a concave quadratic function of $\pi_{1}$ given $b_{1}$, $\kappa_{1}\not=0$, and $\rho_{1}$. As the best response to minimize the agent reward, it is sufficient to consider a behavioral adversarial strategy $g_{1}$ with the following support set: $\left\{\left(\pi_{1}=\varepsilon',\delta_{1}^{2}=\frac{\varepsilon'^{2}\sigma_{1}^{2}}{\lambda-1}\right),\left(\pi_{1}=\varepsilon,\delta_{1}^{2}=\frac{\varepsilon^{2}\sigma_{1}^{2}}{\lambda-1}\right)\right\}$.

When $0=\varepsilon'<\varepsilon$ or $\varepsilon'<\varepsilon=0$, an SPE in behavioral strategies does not exist since $0=\varepsilon'$ or $\varepsilon=0$ will lead to a singleton support set of the behavioral adversarial strategy; and meanwhile a pure strategy SPE does not exist since $\varepsilon'\not=\varepsilon$. This proves Theorem~\ref{th5}.

When $0<\varepsilon'<\varepsilon$ or $\varepsilon'<\varepsilon<0$, we assume that an SPE in behavioral strategies consists of
\begin{gather*}
g_{1}^{*}\left(\pi_{1}=\varepsilon',\left.\delta_{1}^{2}=\frac{\varepsilon'^{2}\sigma_{1}^{2}}{\lambda-1}\right|b_{1}\right)=p^{*};\\
g_{1}^{*}\left(\pi_{1}=\varepsilon,\left.\delta_{1}^{2}=\frac{\varepsilon^{2}\sigma_{1}^{2}}{\lambda-1}\right|b_{1}\right)=1-p^{*};\\
\kappa_{1}^{*}=f_{1}^{*}(b_{1})\qquad\qquad\qquad\qquad\qquad\qquad\qquad\qquad\qquad\qquad\qquad\qquad\\
\;\;\;=\frac{-\theta_{2}\alpha_{1}\beta_{1}E_{g_{1}^{*}}(\Pi_{1})\sigma_{1}^{2}}{(\phi_{1}+\theta_{2}\beta_{1}^{2})\left(E_{g_{1}^{*}}(\Pi_{1}^{2})\left(\mu_{1}^{2}+\frac{\lambda}{\lambda-1}\sigma_{1}^{2}\right)-E_{g_{1}^{*}}^{2}(\Pi_{1})\mu_{1}^{2}\right)};\\
\rho_{1}^{*}=f_{1}^{*}(b_{1})=-E_{g_{1}^{*}}(\Pi_{1})\mu_{1}\kappa_{1}^{*}-\frac{\theta_{2}\alpha_{1}\beta_{1}\mu_{1}}{\phi_{1}+\theta_{2}\beta_{1}^{2}}.
\end{gather*}
Since the assumed $f_{1}^{*}$ is the best response of the assumed $g_{1}^{*}$, we only need to testify if $0<p*<1$ exists such that $g_{1}^{*}$ is the best response of $f_{1}^{*}$, i.e., both $\pi_{1}=\varepsilon'$ and $\pi_{1}=\varepsilon$ are minimizers of the Q-function $Q_{1}^{2}(b_{1},\pi_{1},\kappa_{1}^{*},\rho_{1}^{*})$. There is a unique solution
\begin{align}
p^{*}=\frac{\varepsilon}{\varepsilon'+\varepsilon}.
\end{align}
Therefore, there is a unique SPE in behavioral strategies for the two-stage ALQG game. The strategies and the value function of the SPE in Theorem~\ref{th6} can then be obtained easily.
\end{proof}

\end{document}